\newcommand{\Define}    [1] {\textbf{#1}}
\newcommand{\zero}      { \mathbf{0} }
\newcommand{\one}       { \mathbf{1} }
\newcommand{\bigmeet}   { \bigwedge }
\newcommand{\bigjoin}   { \bigvee }
\newcommand{\Network}       { \mathsf{F} }
\newcommand{\Kernel}        { \mathsf{K} }
\newcommand{\IndependentSets}   { \mathrm{I} }
\newcommand{\Kernels}           { \mathrm{K} }
\newcommand{\Fix}{\mathrm{Fix}}
\newcommand{\Neighbourhood}     { N }
\begin{document}

\title{Words fixing the kernel network and maximum independent sets in graphs}

\author{Maximilien Gadouleau\inst{1}\orcidID{0000-0003-4701-738X} and David C. Kutner\inst{1}\orcidID{0000-0003-2979-4513} }
\authorrunning{M. Gadouleau and D. C. Kutner}

\institute{Durham University, UK\\
\email{\{m.r.gadouleau,david.c.kutner\}@durham.ac.uk}
}
\maketitle              
\begin{abstract}
The simple greedy algorithm to find a maximal independent set of a graph can be viewed as a sequential update of a Boolean network, where the update function at each vertex is the conjunction of all the negated variables in its neighbourhood. In general, the convergence of the so-called kernel network is complex. A word (sequence of vertices) fixes the kernel network if applying the updates sequentially according to that word. We prove that determining whether a word fixes the kernel network is coNP-complete. We also consider the so-called permis, which are permutation words that fix the kernel network. We exhibit large classes of graphs that have a permis, but we also construct many graphs without a permis.

\keywords{Boolean networks \and Graph theory  \and Maximal independent set \and Kernel.}
\end{abstract}



\section{Introduction} \label{section:introduction}

\subsection{Background}


A simple greedy algorithm to find a maximal independent set in a graph works as follows. Starting with the empty set, visit each vertex in the graph, and add it to the set whenever none of its neighbours are already in the set. This can be interpreted in terms of Boolean networks as follows: starting with the all-zero configuration $x$, update one vertex $v$ at a time according to the update function $\bigmeet_{u \sim v} \neg x_u$. For the final configuration $y$, the set of ones is a maximal independent set, regardless of the order in which the vertices have been updated.

We refer to the Boolean network where the update function is the conjunction of all the negated variables in the neighbourhood of a vertex as the \emph{kernel network} on the graph. We use this terminology, as kernels are the natural generalisation of maximal independent sets to digraphs (they are the independent dominating sets). This class of networks has been the subject of some study; we refer to two works in particular.

    In \cite{ARS17}, the fixed points of different conjunctive networks on graphs are studied. In particular, it is shown that the set of fixed points of the kernel network is the set of (configurations whose coordinates equal to one are) maximal independent sets of the graph. They further prove that for square-free graphs, the kernel network is the conjunctive network that maximises the number of fixed points.

    In a completely different setting, Yablo discovered the first non-self-referential paradox in \cite{Yab93}. This paradox is based on the fact that the kernel network on a transitive tournament on $\mathbb{N}$ has no fixed point. The study of acyclic digraphs that admit a paradox is continued further in \cite{RRM13}, where the kernel network is referred to as an $\mathcal{F}$-system.

Boolean networks are used to model networks of interacting entities. As such, it is natural to consider a scenario whereby the different entities update their state at different times. This gives rise to the notion of sequential (or asynchronous) updates. The problem of whether a Boolean network converges sequentially goes back to the seminal result by Robert on acyclic interaction graphs \cite{Rob80}; further results include  \cite{Gol85,GM12,NS17}.
Recently, \cite{AGRS20} introduced the concept of a fixing word: a sequence of vertices such that updating vertices according to that sequence will always lead to a fixed point, regardless of the initial configuration. Large classes of Boolean networks have short fixing words \cite{AGRS20,GR18}.

\subsection{Contributions and outline}



    
Our main result is to show that determining whether a word fixes the kernel network is an NP-hard problem. Our seminal remark is that if $w$ is any permutation of the vertices, then $w$ maps any configuration to an independent set (we say $w$ prefixes the kernel network) and $w$ maps any independent set to a kernel (we say $w$ suffixes the kernel network), and as such $ww$ fixes the kernel network. Once again, whether a word prefixes or suffixes the kernel network only depends on the set of vertices visited by the word. Determining whether a word prefixes the kernel network can be done in polynomial time, while it is coNP-complete to determine whether it suffixes the kernel network. We then determine the sets of vertices $S$ for which there exists a word fixing the kernel network that only visits $S$; deciding whether $S$ is one such set is NP-hard. We use the intractability of that last problem to prove our main result.
    
We then go back to our interpretation of the kernel network in terms of the greedy algorithm for finding a maximal independent set. In that algorithm, the initial configuration is fixed and the permutation of vertices is arbitrary. We then consider fixing the permutation and varying the initial configuration instead. Thus we introduce the notion of a permis, i.e. a permutation that fixes the kernel network. We exhibit large classes of graphs which do have a permis, and some examples and constructions of graphs which do not have a permis.

The rest of the paper is organised as follows. Some necessary background on Boolean networks is reviewed in Section \ref{section:preliminaries}. In Section \ref{subsection:prefix_suffix} we classify those words which prefix or suffix the kernel network and show that determining whether a word fixes the kernel network is coNP-complete. Lastly, we study graphs that have a permis in Section \ref{subsection:permis}.

Due to space limitations, some proofs are given in the appendix, and their sketches are given in the main text instead. We presume that the reader is familiar with some basic graph theory; otherwise, they are directed to \cite{BM08}.

\section{Preliminaries} \label{section:preliminaries}

\subsection{Boolean networks} \label{subsection:background_BN}

A \Define{configuration} on a graph $G = (V,E)$ is $x \in \{0,1\}^V = (x_v : v \in V)$, where $x_v \in \{0,1\}$ is the state of the vertex $v$ for all $v$. We denote $\one( x ) = \{ v \in V: x_v = 1 \}$ and $\zero( x ) = \{ v \in V: x_v = 0 \}$. For any set of vertices $S \subseteq V$, we denote $x_S = (x_v: v \in S)$. We denote the all-zero (all-one, respectively) configuration by $0$ (by $1$, respectively), regardless of its length. 

A \Define{Boolean network} is a mapping $\Network : \{0,1\}^V \to \{0,1\}^V$. 
For any Boolean network $\Network$ and any $v \in V$, the update of the state of vertex $v$ is represented by the network $\Network^v: \{0,1\}^V \to \{0,1\}^V$ where $\Network^v(x)_v = \Network( x )_v$ and $\Network^v( x )_u = x_u$ for all other vertices $u$.
We extend this notation to words as follows: if $w = w_1 \dots w_l$ then
\[
	\Network^w = \Network^{ w_l } \circ \dots \circ \Network^{ w_2 } \circ \Network^{ w_1 }.
\]
Unless otherwise specified, we let $x$ be the initial configuration, $w = w_1 \dots w_l$ a word, $y = \Network^w( x )$ be the final configuration, and for all $0 \le a \le l$, $y^a = \Network^{w_1 \dots w_a}( x )$ be an intermediate configuration, so that $x = y^0$ and $y = y^l$.

The set of \Define{fixed points} of $\Network$ is $\Fix( \Network ) = \{ x \in \{0,1\}^V : \Network( x ) = x \}$. 
The word $w$ \Define{fixes} $\Network$ if for all $x$, $\Network^w( x ) \in \Fix( \Network )$.

\subsection{The kernel network} \label{subsection:networks}


    
    

Let $G = (V,E)$ be a graph. The \Define{kernel network} on $G$, denoted as $\Kernel(G)$ is defined by
    \[
        \Kernel(x)_v = \bigmeet_{u \sim v} \neg x_u,
    \]
with $\Kernel(x)_v = 1$ if $\Neighbourhood(v) = \emptyset$.

An \Define{independent set} $I$ is a set such that $ij \notin E$ for all $i,j \in I$. 
The collection of all configurations $x$ of $G$ such that $\one( x )$ is an independent set of $G$ is denoted by $\IndependentSets(G)$. 
A \Define{dominating set} $D$ is a set such that for every vertex $v \in V$, either $v \in D$ or there exists $u \in D$ such that $uv \in E$. A \Define{kernel} $K$ is a dominating independent set. Equivalently, a kernel is a \Define{maximal independent set} of $G$, i.e. an independent set $K$ such that there is no independent set $J \supset K$. The collection of all configurations $x$ of $G$ such that $\one( x )$ is a kernel of $G$ is denoted by $\Kernels(G)$.  It is easily seen (for instance, in \cite{ARS17}) that $\Fix( \Kernel( G ) ) = \Kernels( G )$.

\section{Words fixing the kernel network} \label{subsection:prefix_suffix}

We now focus on words fixing the kernel network. Whether a word fixes the kernel network does not only depend on the set of vertices it visits. For example, if $G$ is the path on the three vertices $a, b, c$ with edges $ab, bc$, then $w = abc$ does not fix $\Kernel( G )$ (if $x = 111$, then $y = 001$), while it is easily checked that $w = acb$ does fix $\Kernel(G)$. In general, characterising the fixing words for $\Kernel(G)$ remains an open problem. However, we manage to prove that deciding whether a word fixes the kernel network is computationally hard.

We define \textsc{Fixing Word} to be the decision problem asking, for an instance $(G, w)$, whether $w$ fixes $\Kernel( G )$.

\begin{theorem} \label{theorem:complexity_fixing_word_K}
\textsc{Fixing Word} is coNP-complete.
\end{theorem}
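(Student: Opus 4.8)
The plan is to prove coNP-completeness in two parts: membership in coNP and coNP-hardness. For membership, I would observe that a word $w$ \emph{fails} to fix $\Kernel(G)$ precisely when there exists an initial configuration $x$ such that $\Network^w(x) \notin \Kernels(G)$; such an $x$ serves as a polynomial-size certificate, since computing $\Network^w(x)$ and checking whether the result is a kernel are both polynomial-time operations. Hence \textsc{Fixing Word} lies in coNP.

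The substance is the coNP-hardness reduction, and this is where I would lean on the structural decomposition flagged in the introduction. The key insight is to split ``$w$ fixes $\Kernel(G)$'' into the prefix and suffix conditions: the excerpt tells us that determining whether a word \emph{suffixes} the kernel network (maps every independent set to a kernel) is coNP-complete, while the prefix condition (mapping every configuration to an independent set) is polynomial-time decidable and depends only on the visited vertex set. The natural strategy is therefore to reduce from the suffixing problem. Given an instance asking whether some word $v$ suffixes $\Kernel(H)$, I would construct a graph $G$ and word $w$ so that $w$ fixes $\Kernel(G)$ if and only if $v$ suffixes $\Kernel(H)$. The cleanest route uses the seminal remark that any permutation word $p$ both prefixes and suffixes, so a doubled permutation $pp$ fixes; I would engineer $w$ to first run a permutation-style prefix that forces every configuration into an independent set (guaranteeing the prefix condition holds unconditionally), and then run the suffixing-candidate portion, so that the only way $w$ can fail to fix is through a failure of the embedded suffix condition.

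The main obstacle I anticipate is ensuring the reduction is faithful in both directions and that the prefix phase cannot accidentally ``repair'' a configuration that the suffix phase would otherwise mishandle. Concretely, I must verify that after the prefix phase every reachable intermediate configuration is exactly an arbitrary independent set of the relevant subgraph, so that the suffix phase is genuinely tested against all independent sets and not merely some restricted subfamily; conversely, I must confirm that no independent set escapes being driven to a kernel unless the original suffixing instance was a ``no'' instance. This requires a careful choice of $G$ (likely $H$ augmented with auxiliary vertices or an ordering gadget) and of $w$ (a controlled prefix permutation followed by $v$), together with a lemma pinning down which configurations are reachable at the prefix/suffix boundary.

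Assuming the suffixing problem's coNP-hardness is available as a black box, the remaining work is purely the gadget construction and a two-way correctness argument; I would expect the forward direction (suffixing $\Rightarrow$ fixing) to be routine given the prefix guarantee, and the reverse direction (a non-suffixing witness $\Rightarrow$ a non-fixing configuration) to need the reachability lemma to lift the witness back through the prefix phase. Combining coNP membership with this hardness reduction yields coNP-completeness.
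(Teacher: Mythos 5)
Your coNP-membership argument is exactly the paper's. The hardness half, however, has a genuine gap: the reduction you sketch, taking $w = p\,v$ with $p$ a full permutation of the vertices, is not faithful, and the ``reachability lemma'' you defer to is false for natural choices of $p$. After the prefix phase the reachable configurations form a \emph{strict} subfamily of the independent sets, and which subfamily you get depends delicately on $p$. Concretely, take $G$ the path $a,b,c$ and $v = b$: the word $b$ does not suffix $\Kernel(G)$ (the independent set $001$ is not driven to a kernel), yet if you choose $p = acb$ --- which fixes $\Kernel(G)$ --- then $pv = acbb$ fixes $\Kernel(G)$, since the image of $\Kernel^{acb}$ consists only of fixed points and a further update does nothing. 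So a no-instance of the suffixing problem maps to a yes-instance of \textsc{Fixing Word}. In the extreme, any permis $p$ collapses the image of the prefix phase to $\Kernels(G)$, on which the suffix candidate is never genuinely tested; but choosing $p$ so that the image is rich enough is essentially as hard as the problem you are trying to solve, and no gadget restoring richness is actually constructed in your proposal.

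The paper avoids reachability entirely by working with properties that depend only on the visited set: $w$ prefixes iff $[w]$ is a vertex cover (Proposition \ref{proposition:prefix}), $w$ suffixes iff $[w]$ is a non-dominion (Proposition \ref{proposition:suffix}), and --- since a fixing word trivially both prefixes and suffixes, every independent set being itself a legal initial configuration --- a word $\omega\omega$ with $\omega$ a permutation of $S$ fixes $\Kernel(G)$ iff $S$ is a vertex cover \emph{and} a non-dominion (Proposition \ref{proposition:fixing_sets}). The hardness then comes by reducing \textsc{Non-Dominion} to \textsc{Fixing Set} (Theorem \ref{theorem:complexity_fixing_set}), where the real work is a graph gadget (disjoint copies $G_t$ with apex vertices $\hat{t}$) that makes the target set a vertex cover automatically while preserving its non-dominion status; the final step to \textsc{Fixing Word} is then the one-line $\omega\omega$ construction. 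Your proposal correctly identifies the suffixing problem's coNP-hardness as the engine, but the missing gadget and the failure of your boundary lemma are precisely the substance of the proof, so as it stands the reduction does not go through.
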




The rest of this section is devoted to the proof of Theorem \ref{theorem:complexity_fixing_word_K}.

The set of vertices visited by a word $w$ is denoted by $[w] = \{v \in V : \exists a, v = w_a \}$. A \Define{permutation} of $V$ (or of $G$) is a word $w = w_1 \dots w_n$ such that $[w] = V$ and $w_a \ne w_b$ for all $a \ne b$. If $w$ is a permutation of $G$, then $ww$ fixes $\Kernel( G )$: for any initial configuration $x$, $\Kernel^w( x ) \in \IndependentSets( G )$; then for any $y \in \IndependentSets( G )$, $\Kernel^w( y ) \in \Kernels( G )$. 

Accordingly, we say that $w^\mathrm{p}$ \Define{prefixes} $\Kernel( G )$ if $\Kernel^{ w^\mathrm{p} }(x) \in \IndependentSets( G )$ for all $x \in \{0,1\}^n$, and that $w^\mathrm{s}$ \Define{suffixes} $\Kernel( G )$ if $\Kernel^{ w^\mathrm{s} }(y) \in \Kernels( G )$ for all $y \in \IndependentSets( G )$. In that case, for any word $\omega$, $w^\mathrm{p}\omega$ also prefixes $\Kernel( G )$ and $\omega w^\mathrm{s}$ also suffixes $\Kernel( G )$. Clearly, if $w = w^\mathrm{p} w^\mathrm{s}$, where $w^\mathrm{p}$ prefixes $\Kernel(G)$ and $w^\mathrm{s}$ suffixes $\Kernel(G)$, then $w$ fixes $\Kernel(G)$. We can be more general, as shown below.

\begin{proposition} \label{proposition_prefix_and_suffix}
If $w = w_1 \dots w_l$ where $w_1 \dots w_a$ prefixes $\Kernel(G)$, $w_b \dots w_l$ suffixes $\Kernel( G )$, and $[w_b \dots w_a]$ is an independent set of $G$ for some $0 \le a, b \le l$, then $w$ fixes $\Kernel( G )$.
\end{proposition}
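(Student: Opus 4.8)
The plan is to show that for an \emph{arbitrary} initial configuration $x$, the final configuration $y = \Kernel^w(x)$ can be written as $\Kernel^{w_b \dots w_l}(z)$ for some $z \in \IndependentSets(G)$. Since $w_b \dots w_l$ suffixes $\Kernel(G)$, this forces $y \in \Kernels(G)$, and as $x$ is arbitrary, $w$ fixes $\Kernel(G)$. The natural candidate for $z$ is the intermediate configuration $y^a = \Kernel^{w_1 \dots w_a}(x)$, which lies in $\IndependentSets(G)$ precisely because $w_1 \dots w_a$ prefixes $\Kernel(G)$.

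First I would dispose of the case $b > a$, where the overlap word $w_b \dots w_a$ is empty and the independence hypothesis is vacuous. Here $w_1 \dots w_{b-1}$ still prefixes $\Kernel(G)$, since appending $w_{a+1} \dots w_{b-1}$ to the prefixing word $w_1 \dots w_a$ preserves the prefixing property; thus $w = (w_1 \dots w_{b-1})(w_b \dots w_l)$ is a concatenation of a prefixing and a suffixing word, and fixes $\Kernel(G)$ by the remark preceding the proposition.

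The substantive case is $b \le a$, where the prefix and suffix overlap on the positions $b, \dots, a$, visiting the independent set $S = [w_b \dots w_a]$. The key tool is a non-interference fact: if $S$ is independent and $[u] = S$, then for every configuration $z$ one has $\Kernel^u(z)_v = \Kernel(z)_v$ for $v \in S$ and $\Kernel^u(z)_v = z_v$ for $v \notin S$. Indeed, every neighbour of a vertex $v \in S$ lies outside $S$ and is therefore never touched while $u$ is applied, so each update of $v$ recomputes the same value $\bigmeet_{u' \sim v} \neg z_{u'}$. I would then apply this fact twice to the overlap word $w_b \dots w_a$: once started from $y^{b-1}$, which expresses the $S$-coordinates of $y^a$ as $\Kernel(y^{b-1})_v$ and shows that the coordinates outside $S$ agree in $y^{b-1}$ and $y^a$; substituting the latter into the former yields $(y^a)_v = \bigmeet_{u' \sim v} \neg (y^a)_{u'} = \Kernel(y^a)_v$ for all $v \in S$. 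Applying the fact a second time, now started from $y^a$, then gives the idempotence $\Kernel^{w_b \dots w_a}(y^a) = y^a$.

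With this in hand the proof closes quickly: $y = \Kernel^{w_{a+1} \dots w_l}(y^a) = \Kernel^{w_{a+1} \dots w_l}\bigl(\Kernel^{w_b \dots w_a}(y^a)\bigr) = \Kernel^{w_b \dots w_l}(y^a)$, and since $y^a \in \IndependentSets(G)$ and $w_b \dots w_l$ suffixes $\Kernel(G)$, we conclude $y \in \Kernels(G)$. The main obstacle is exactly the overlap: a tail such as $w_{a+1} \dots w_l$ of the suffixing word $w_b \dots w_l$ need not itself suffix, so one cannot finish by naively splitting $w$. The crux is therefore the idempotence $\Kernel^{w_b \dots w_a}(y^a) = y^a$, which uses both hypotheses simultaneously — independence of $S$ makes the overlap updates non-interfering, while the fact that $y^a$ is the output of the prefix forces its $S$-coordinates to already satisfy the kernel fixed-point relation.
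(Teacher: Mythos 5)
Your proof is correct and takes essentially the same route as the paper: the same case split on $a < b$ versus $b \le a$, and in the overlap case the same reduction of $\Kernel^w(x)$ to $\Kernel^{w_b \dots w_l}\bigl(\Kernel^{w_1 \dots w_a}(x)\bigr)$ so that the suffixing property can be applied to the independent configuration $y^a$. The only difference is in presentation: the paper justifies duplicating the overlap block through the operator identities $\Kernel^{vv} = \Kernel^{v}$ and $\Kernel^{uv} = \Kernel^{vu}$ for $u \not\sim v$ (rewriting the word itself), whereas you establish the same idempotence of $\Kernel^{w_b \dots w_a}$ pointwise via your non-interference fact --- note that for this you only need that $y^a$ is the output of the overlap block, not that it is independent, so the two hypotheses are used more separately than your closing remark suggests.
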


\begin{proof}
First, suppose $a < b$, so that $w = w_1 \dots w_a \dots w_b \dots w_l$. As mentioned above, $w^\mathrm{p} = w_1 \dots w_{b-1}$ prefixes $\Kernel(G)$ and $w^\mathrm{s} = w_b \dots w_l$ suffixes $\Kernel(G)$, hence $w = w^\mathrm{p} w^\mathrm{s}$ fixes $\Kernel(G)$.

Second, suppose $a \ge b$, so that $w = w_1 \dots w_b \dots w_a \dots w_l$. It is easily seen that if $u \not\sim v$, $\Kernel^{vv} = \Kernel^{ v }$ and $\Kernel^{uv} = \Kernel^{vu}$. As such, 
\[
    \Kernel^w = \Kernel^{w_1 \dots w_b w_b \dots w_a w_a \dots w_l} = \Kernel^{w_1 \dots w_b \dots w_a w_b \dots w_a \dots w_l},
\]
and again if we let $w^\mathrm{p} = w_1 \dots w_a$ and $w^\mathrm{s} = w_b \dots w_l$, we have $\Kernel^w = \Kernel^{ w^\mathrm{p} w^\mathrm{s} }$, hence $w$ fixes $\Kernel( G )$.
\qed \end{proof}





We now characterise the words that prefix (or suffix) the kernel network. Interestingly, those properties depend only on $[w]$.

\begin{proposition} 
\label{proposition:prefix}
Let $G$ be a graph. Then $w$ prefixes $\Kernel(G)$ if and only if $[w]$ is a vertex cover of $G$.
\end{proposition}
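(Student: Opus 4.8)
The plan is to prove both implications via the observation that $w$ prefixes $\Kernel(G)$ exactly when, for every initial $x$, the final configuration $y = \Kernel^w(x)$ has no edge $uv$ with $y_u = y_v = 1$. So the whole argument reduces to controlling, for each edge, whether both endpoints can simultaneously end up at state $1$.

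The reverse direction is the contrapositive and is easy: if $[w]$ is not a vertex cover, pick an edge $uv \in E$ with neither endpoint in $[w]$, and take the initial configuration $x$ with $x_u = x_v = 1$. Since neither $u$ nor $v$ is ever updated along $w$, we get $y_u = y_v = 1$; as $u \sim v$, the set $\one(y)$ is not independent, so $w$ does not prefix $\Kernel(G)$.

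For the forward direction, assume $[w]$ is a vertex cover and suppose, for contradiction, that some edge $uv$ satisfies $y_u = y_v = 1$ in the final configuration. The naive approach is to examine the last update of a single endpoint; I expect this to be the main obstacle, because after $v$ is updated for the last time, $u$ may be set back to $1$, so one endpoint's history does not suffice. The fix is to track both endpoints and exploit the ordering of their last updates. Since $[w]$ covers $uv$, at least one endpoint, say $v$, is visited; let $a$ be the position of its last update. Then $y_v = y^a_v = \bigmeet_{t \sim v} \neg y^{a-1}_t \le \neg y^{a-1}_u$, so $y_v = 1$ forces $y^{a-1}_u = 0$, and since position $a$ updates $v$ rather than $u$, also $y^a_u = 0$. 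If $u$ were never updated after $a$, this would give $y_u = 0$, contradicting $y_u = 1$; hence $u$ is updated after $a$, and I take $b > a$ to be its last update.

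To close the argument, I apply the same update identity to $u$: $y_u = y^b_u \le \neg y^{b-1}_v$, so $y_u = 1$ forces $y^{b-1}_v = 0$. But $v$ is not updated at any position after $a$, and $b - 1 \ge a$, so $y^{b-1}_v = y^a_v = y_v = 1$, the desired contradiction. Therefore no monochromatic-$1$ edge exists, $\one(y)$ is independent for every initial $x$, and $w$ prefixes $\Kernel(G)$. The key step, and the one I expect to require care, is precisely the interleaving of the two last-update positions $a < b$, since the whole contradiction hinges on $v$ lying dormant throughout the interval in which $u$ makes its final move.
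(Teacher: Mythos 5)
Your proof is correct and follows essentially the same approach as the paper: the converse by exhibiting an uncovered edge frozen at $11$, and the forward direction by a last-update analysis on the endpoints of a putative bad edge. The only difference is cosmetic: the paper takes the last update within the pair $\{u,v\}$ directly, which makes the endpoint just before that update already carry its final value and so avoids your second case (the position $b > a$) entirely.
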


\begin{proof}
Suppose $[w]$ is a vertex cover of $G$ and that $y = \Kernel^w( x ) \notin \IndependentSets( G )$, i.e. $y_{uv} = 11$ for some edge $uv$ of $G$. Without loss, let the last update in $\{u,v\}$ be $v$, i.e. there exists $a$ such that $w_a = v$ and $w_b \notin \{ u, v \}$ for all $b > a$. Let $z = \Kernel^{ w_1 \dots w_{a-1} }( x )$, then $z_u = y_u = 1$ hence $y_v = 0$, which is the desired contradiction.

Conversely, if $[w]$ is not a vertex cover, then there is an edge $uv \in E$ such that $[w] \cap \{u,v\} = \emptyset$. Therefore, for any $x$ with $x_{uv} = 11$, we have $y_{uv} = 11$ as well.
\qed \end{proof}

\begin{corollary} \label{corollary:complexity_prefix}
Given a graph $G$ and a word $w$, determining whether $w$ prefixes $\Kernel(G)$ is in P.
\end{corollary}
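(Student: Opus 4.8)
The plan is to observe that Corollary~\ref{corollary:complexity_prefix} follows immediately from the characterisation established in Proposition~\ref{proposition:prefix}. Since that proposition tells us that $w$ prefixes $\Kernel(G)$ \emph{if and only if} $[w]$ is a vertex cover of $G$, the decision problem reduces to a purely combinatorial check on the graph, with no reference to configurations or to the dynamics of $\Kernel(G)$ at all. This is the key simplification: rather than quantifying over all $2^n$ initial configurations $x$, we need only inspect the fixed set $[w]$ against the edge set $E$.

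First I would extract $[w] = \{v \in V : \exists a,\ v = w_a\}$ from the word $w$; this is a single linear scan over the letters of $w$, and can be done in time $O(|w|)$ (for instance by marking each visited vertex in a Boolean array indexed by $V$). Next I would verify the vertex-cover condition directly from its definition: $[w]$ is a vertex cover precisely when every edge $uv \in E$ has at least one endpoint in $[w]$. I would iterate over all edges and, for each edge $uv$, check whether $u \in [w]$ or $v \in [w]$, accepting if and only if every edge passes this test. Each membership query is $O(1)$ using the array from the previous step, so the whole loop runs in $O(|E|)$ time. The algorithm accepts exactly when $[w]$ is a vertex cover, which by Proposition~\ref{proposition:prefix} is exactly when $w$ prefixes $\Kernel(G)$.

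Since the total running time is $O(|w| + |V| + |E|)$, which is polynomial (indeed linear) in the size of the input $(G,w)$, the problem lies in \textup{P}, as claimed. There is no real obstacle here: the entire content of the corollary has already been supplied by the preceding proposition, and all that remains is to note that testing the vertex-cover property of a given vertex set is an easy polynomial-time task. The only point requiring any care is the bookkeeping of the membership data structure so that the edge scan is genuinely efficient, but this is routine.
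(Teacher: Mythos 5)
Your proof is correct and follows exactly the route the paper intends: the corollary is stated there as an immediate consequence of Proposition~\ref{proposition:prefix}, with the (implicit) polynomial-time check being precisely the linear-time extraction of $[w]$ and the edge-by-edge vertex-cover test you describe. Nothing is missing, and your explicit running-time accounting only makes the argument more complete than the paper's own treatment.
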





A subset $S$ of vertices of a graph is a \Define{colony} if there exists an independent set $I$ such that $S \subseteq \Neighbourhood(I)$. Alternatively, a colony is a set $S$ such that $V \setminus S$ contains a maximal independent set. A subset $W$ of vertices is a \Define{dominion} if there exists $v \in V \setminus W$ such that $W \cap \Neighbourhood(v)$ is a colony of $G - v$. A \Define{non-dominion} is a set of vertices that is not a dominion.

\begin{proposition}
\label{proposition:suffix}
Let $G$ be a graph. Then the word $w$ suffixes $\Kernel(G)$ if and only if $[w]$ is a non-dominion of $G$.
\end{proposition}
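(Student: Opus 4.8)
The plan is to prove the equivalence in contrapositive form, establishing that $w$ fails to suffix $\Kernel(G)$ exactly when $W := [w]$ is a dominion. Two elementary facts drive everything. First, since $y \in \IndependentSets(G)$ and a kernel update either leaves a vertex at $0$ or switches a vertex with an all-zero current neighbourhood to $1$, independence of the set of ones is preserved along the whole trajectory; hence $z := \Kernel^w(y)$ is independent, and it fails to be a kernel iff some vertex $v$ is \emph{undominated}, i.e.\ $\one(z) \cap (\{v\} \cup \Neighbourhood(v)) = \emptyset$. Second, a vertex $u \notin W$ never changes, so $z_u = y_u$, whereas a vertex $u \in W$ ends at $z_u = 0$ as soon as some neighbour is $1$ at the moment of $u$'s last update. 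The organising principle is that, to leave $v$ undominated, the ones that drive $\Neighbourhood(v) \cap W$ down to $0$ must themselves avoid $\Neighbourhood(v)$ (otherwise they would dominate $v$); this is precisely what the colony-of-$G-v$ condition is meant to encode, so I will use certifying independent sets that avoid the closed neighbourhood $\{v\} \cup \Neighbourhood(v)$.

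For the implication \emph{$W$ is a dominion $\Rightarrow$ $w$ does not suffix}, I start from a witness $v \in V \setminus W$ for which $W \cap \Neighbourhood(v)$ is a colony of $G-v$, and take a certifying independent set avoiding $\{v\} \cup \Neighbourhood(v)$; extending it inside the subgraph induced on $V \setminus (\{v\} \cup \Neighbourhood(v))$ to a maximal independent set $M$ there makes $M$ dominate all of $W$. I then set $y$ with $\one(y) = M$, which is independent and is $0$ on $\{v\} \cup \Neighbourhood(v)$. Running $w$, every updated vertex of $W$ has an $M$-neighbour equal to $1$, so it is sent to $0$ and no new ones ever appear; thus the ones stay exactly $M$, the vertex $v \notin W$ keeps $z_v = 0$, each neighbour of $v$ outside $W$ keeps its initial $0$, and each neighbour of $v$ in $W$ is driven to $0$. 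Hence $\{v\} \cup \Neighbourhood(v)$ is entirely $0$ in $z$, so $z$ is not a kernel and $w$ does not suffix.

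For the converse \emph{$w$ does not suffix $\Rightarrow$ $W$ is a dominion}, I take independent $y$ with $z = \Kernel^w(y)$ not a kernel, so $J := \one(z)$ is independent but not dominating, with some undominated vertex and no edges between $J$ and the undominated set. The plan has two parts. (i)~Show an undominated vertex may be taken \emph{outside} $W$: among all undominated vertices choose one whose last update in $w$ is latest and trace the ones that were switched off after it, forcing the terminus of this chain to expose an undominated vertex in $V \setminus W$. (ii)~For such a $v \notin W$, certify that $W \cap \Neighbourhood(v)$ is a colony of $G-v$ by an independent set avoiding $\Neighbourhood(v)$ that dominates it. The natural candidate is $J$ itself, since $J$ avoids $\{v\} \cup \Neighbourhood(v)$; however a vertex $u \in W \cap \Neighbourhood(v)$ is only guaranteed a $1$-neighbour at its \emph{last update}, and that neighbour may have died before the end, so $J$ may have to be enlarged by undominated vertices lying outside $W \cap \Neighbourhood(v)$ to reach a maximal independent set of $G-v$ that still avoids $W \cap \Neighbourhood(v)$.

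The main obstacle is the bookkeeping concentrated in the converse: the reduction of the undominated witness to a vertex of $V \setminus W$ in~(i), and the survival of suppressing ones in~(ii). Both revolve around the same phenomenon—a neighbour that is $1$ at $u$'s last update but $0$ in $z$ must have been knocked down by a still-later one—so I expect to tame the apparent regress by the extremal choice of a latest-updated witness, turning it into a finite descent that terminates at a persistent, $\Neighbourhood(v)$-avoiding independent set; this same persistence is exactly what the forward construction relies on. To keep the orderings from proliferating, I would first invoke Proposition~\ref{proposition_prefix_and_suffix} (namely $\Kernel^{vv} = \Kernel^v$ and $\Kernel^{uv} = \Kernel^{vu}$ whenever $u \not\sim v$) to normalise $w$ so that each vertex of $W$ is updated exactly once; this normalisation is also what makes the suffixing property depend only on the set $[w]$, as the statement asserts.
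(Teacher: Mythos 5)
Your forward direction is essentially sound and matches the paper's: you start from a witness $v \notin [w]$ together with a colony certificate avoiding $\Neighbourhood[v]$, set the ones of the initial configuration to (an extension of) that certificate, and observe that these ones persist and drive all of $\Neighbourhood[v]$ to $0$. Your insistence that the certificate avoid $\Neighbourhood[v]$ is the right reading of the dominion condition — it is exactly the kind of certificate the paper's converse produces, since there $I = \one(y) \cap \Neighbourhood(W)$ and $y_{\Neighbourhood[v]} = 0$ forces $I \cap \Neighbourhood[v] = \emptyset$ — and your extension to a maximal independent set $M$ of $G - \Neighbourhood[v]$ (so that the ones stay exactly $M$) is more than is needed but correct.

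The genuine gap is in your converse, and it stems from missing the decisive consequence of your own first fact. You correctly note that every intermediate configuration stays independent, but you do not draw the conclusion: whenever a vertex currently in state $1$ is updated, independence forces all its neighbours to be $0$, so the update returns $1$ — hence, starting from $y \in \IndependentSets(G)$, \emph{a one can never be destroyed} and $\one(y^a)$ is non-decreasing along the trajectory. The entire ``that neighbour may have died before the end'' scenario, around which you build the latest-updated witness, the chains, the finite descent, and the enlargement of $J$, simply cannot occur. With monotonicity, your step (i) is immediate: if a vertex $v$ with $z_{\Neighbourhood[v]} = 0$ were in $[w]$, the $1$-neighbour present at its last update would persist into $z$, a contradiction, so every undominated vertex automatically lies outside $[w]$; and step (ii) is immediate: each $u \in [w] \cap \Neighbourhood(v)$ ending at $0$ has a $1$-neighbour at its last update that survives to $z$, so $J = \one(z) \cap \Neighbourhood(W)$ certifies the colony with no enlargement — which is precisely the paper's two-line converse. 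As written, your part (i) is only a plan, and it would fail: the chain you trace moves \emph{away} from the witness (the killer of the killer need not be adjacent to $u$), so nothing forces the descent to terminate at an undominated vertex outside $W$. Separately, your normalisation step is false: $\Kernel^{uv} = \Kernel^{vu}$ holds only for non-adjacent $u, v$, so an arbitrary word cannot be rearranged so that each vertex occurs once; on the single edge $uv$, the word $uvu$ cannot be so reduced (the two occurrences of $u$ are separated by the adjacent $v$ and cannot be merged, and indeed $\Kernel^{uvu}$ and $\Kernel^{vu}$ already differ on the all-zero configuration, giving final ones $\{u\}$ versus $\{v\}$). The fact that suffixing depends only on $[w]$ is a consequence of the characterisation itself, not of any such normal form — and with monotonicity in hand, the normalisation is unnecessary anyway.
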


\begin{proof}
Suppose $[w]$ is a dominion of $G$, i.e. there exists an independent set $I$ and a vertex $v \notin [w]$ such that $W = [w] \cap \Neighbourhood(v)$ is in the neighbourhood of $I$. Let $x$ such that $x_I = 1$ and $x_{V \setminus I} = 0$, and let $y = \Kernel^w(x)$. Then for any $u \in W$, $u$ has a neighbour in $I$, hence $y_u = 0$; thus $y_{\Neighbourhood[v]} = 0$ and $w$ does not suffix $\Kernel$.

Conversely, suppose there exists $x$ and $v$ such that $y = \Kernel^w(x)$ with $y_{\Neighbourhood[v]} = 0$. Then $x_{\Neighbourhood[v]} = 0$. Let $W = [w] \cap \Neighbourhood(v)$ and $I = \one( y ) \cap \Neighbourhood(W)$; we note that $I$ is an independent set. For each $u \in W$, we have $y_u = 0$ hence there exists $i \in I$ such that $u \in \Neighbourhood(i)$. Therefore, $W \subseteq \Neighbourhood(I)$ and $W$ is a colony of $G - v$.
\qed \end{proof}

The \textsc{Colony} (respectively, \textsc{Dominion}, \textsc{Non-Dominion}) problem asks, given a graph $G$ and set $T$, if $T$ a colony (resp. a dominion, a non-dominion) of $G$.

\begin{theorem} \label{theorem:complexity_suffix}
Given $G$ and $w$, determining whether $w$ suffixes $\Kernel(G)$ is coNP-complete.
\end{theorem}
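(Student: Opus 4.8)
The plan is to prove the two directions separately, working throughout with the characterisation of Proposition \ref{proposition:suffix}, which says that $w$ suffixes $\Kernel(G)$ if and only if $[w]$ is a non-dominion of $G$. Thus deciding the suffix property is exactly the \textsc{Non-Dominion} problem with $T = [w]$, and it suffices to show that \textsc{Non-Dominion} is coNP-complete (equivalently, that \textsc{Dominion} is NP-complete).

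For membership in coNP, I would show the complementary problem, \textsc{Dominion}, is in NP. A dominion has a short certificate: a pair $(v, I)$ with $v \in V \setminus [w]$ and $I$ an independent set of $G - v$. One checks in polynomial time that $I$ is independent and that $[w] \cap \Neighbourhood(v) \subseteq \Neighbourhood(I)$ (with the neighbourhood taken in $G - v$); by definition this certifies that $[w] \cap \Neighbourhood(v)$ is a colony of $G - v$, hence that $[w]$ is a dominion. So \textsc{Dominion} is in NP and deciding whether $w$ suffixes $\Kernel(G)$ is in coNP.

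The core of the hardness is to establish that \textsc{Colony} is NP-complete, by reduction from \textsc{Sat}. Given a formula $\phi$ with variables $x_1, \dots, x_n$ and clauses $C_1, \dots, C_m$, I would build a graph $H$ with, for each variable, two adjacent vertices $t_i, f_i$, and, for each clause, a vertex $c_j$ joined to $t_i$ whenever $x_i \in C_j$ and to $f_i$ whenever $\neg x_i \in C_j$. Taking $S = \{c_1, \dots, c_m\}$, any independent set $I$ witnessing that $S$ is a colony must contain, for each $j$, a neighbour of $c_j$; since the only neighbours of $c_j$ in $H$ are literal-vertices, and independence forbids selecting both $t_i$ and $f_i$, such an $I$ is exactly a satisfying assignment of $\phi$. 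Hence $S$ is a colony of $H$ if and only if $\phi$ is satisfiable, and membership in NP is immediate from the certificate $I$. It then remains to reduce \textsc{Colony} to \textsc{Dominion}: given $(H, S)$, form $G$ from $H$ by adding one new vertex $v$ adjacent exactly to $S$, and set $T = V(H)$ (realised by any word $w$ with $[w] = T$). Since $v$ is the unique vertex outside $T$, the only candidate dominion-vertex is $v$, and there $T \cap \Neighbourhood(v) = S$ while $G - v = H$; therefore $T$ is a dominion of $G$ if and only if $S$ is a colony of $H$. Consequently $w$ suffixes $\Kernel(G)$ iff $S$ is \emph{not} a colony of $H$, reducing the coNP-complete \textsc{Non-Colony} problem to the suffix problem and yielding coNP-hardness.

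The subtle point, and the main obstacle, is the existential quantifier $\exists v \in V \setminus W$ in the definition of a dominion. A naive one-step reduction from \textsc{Sat} that adds a vertex dominating all of $S$ but leaves many vertices outside $T$ fails, because any subset of $S$ is trivially a colony as soon as that dominating vertex survives the deletion, so $T$ becomes a dominion for spurious reasons independent of $\phi$. The device of setting $T = V(H)$ eliminates every alternative choice of the dominion-vertex, pinning it to the single added vertex $v$ and making the equivalence tight; securing this control is exactly why the argument is routed through \textsc{Colony} rather than reduced directly.
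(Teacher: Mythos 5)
Your proposal is correct, and it reaches the same high-level structure as the paper (membership via the \textsc{Dominion} certificate $(v,I)$, hardness via \textsc{Colony} then \textsc{Dominion}, all mediated by Proposition \ref{proposition:suffix}), but both of your hardness reductions differ genuinely from the paper's. For \textsc{Colony}, the paper reduces from \textsc{Set Cover}, using $k$ copies $q_j^1,\dots,q_j^k$ of each set-vertex linked in cliques $\{q_1^l,\dots,q_m^l\}$ to enforce the cardinality bound $|I| \le k$; your reduction from \textsc{Sat} avoids any cardinality bookkeeping, since the edge $t_i f_i$ alone enforces consistency of the assignment and each clause vertex $c_j$ is covered exactly when some literal of $C_j$ is selected. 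For \textsc{Colony} to \textsc{Dominion}, the paper keeps $T = V \setminus S$ outside the target set and must therefore neutralise every potential witness $u \in T$, which it does by attaching a pendant copy $t'$ to each $t \in T$ (the isolated $t'$ in $\hat{G}-t$ kills the colony condition); your choice $T = V(H)$ makes the single added vertex $v$ the \emph{only} element of $V(G) \setminus T$, so the existential quantifier collapses for free and no gadget is needed --- your closing remark correctly identifies this quantifier as the crux. Both routes are sound; yours is leaner on both steps, while the paper's \textsc{Set Cover} reduction and pendant-vertex gadget produce instances where $V \setminus T$ is large, though nothing in the subsequent reduction to \textsc{Fixing Set} (Theorem \ref{theorem:complexity_fixing_set}) exploits that, so your simpler construction would serve the rest of the paper equally well.
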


\begin{proof}[Sketch]
The proof is by successive reductions: \textsc{Set Cover} to \textsc{Colony} to \textsc{Dominion}. The full proof is given in Appendix \ref{appendix:proof_of_theorem:complexity_suffix}.
\end{proof}

We now characterise the sets of vertices $S$ visited by fixing words of the kernel network. Interestingly, those are the same sets $S$ such that $ww$ is a fixing word for any permutation $w$ of $S$.

\begin{proposition} \label{proposition:fixing_sets}
Let $S$ be a subset of vertices of $G$. The following are equivalent.
\begin{enumerate}
    \item \label{item:word_fixing}
    There exists a word $w$ with $[w] = S$ that fixes $\Kernel( G )$.

    \item \label{item:words_prefix_suffix}
    For all $w^\mathrm{p}$, $w^\mathrm{s}$ such that $[ w^\mathrm{p} ] = [ w^\mathrm{s} ] = S$, the word $w^\mathrm{p} w^\mathrm{s}$ fixes $\Kernel( G )$.

    \item \label{item:subset}
    $S$ is a vertex cover and a non-dominion.
\end{enumerate}
\end{proposition}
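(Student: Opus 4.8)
The plan is to prove the cyclic chain of implications
\eqref{item:word_fixing} $\Rightarrow$ \eqref{item:subset} $\Rightarrow$ \eqref{item:words_prefix_suffix} $\Rightarrow$ \eqref{item:word_fixing}, drawing on the characterisations of prefixing and suffixing words already established in Propositions \ref{proposition:prefix} and \ref{proposition:suffix}. The implication \eqref{item:words_prefix_suffix} $\Rightarrow$ \eqref{item:word_fixing} is immediate: fix any permutation $w$ of $S$ and take $w^\mathrm{p} = w^\mathrm{s} = w$, so that $w^\mathrm{p} w^\mathrm{s} = ww$ is a fixing word visiting exactly $S$. The implication \eqref{item:subset} $\Rightarrow$ \eqref{item:words_prefix_suffix} is also short given the earlier results: if $S$ is a vertex cover then by Proposition \ref{proposition:prefix} every word $w^\mathrm{p}$ with $[w^\mathrm{p}] = S$ prefixes $\Kernel(G)$, and if $S$ is a non-dominion then by Proposition \ref{proposition:suffix} every word $w^\mathrm{s}$ with $[w^\mathrm{s}] = S$ suffixes $\Kernel(G)$; concatenating a prefixing word with a suffixing word always yields a fixing word, so $w^\mathrm{p} w^\mathrm{s}$ fixes $\Kernel(G)$.

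The real work is in \eqref{item:word_fixing} $\Rightarrow$ \eqref{item:subset}: from the mere existence of \emph{one} fixing word $w$ with $[w] = S$, I must extract both the vertex-cover and the non-dominion properties of $S$. I would argue each by contraposition, using the same bad-configuration constructions that appear in the converse directions of Propositions \ref{proposition:prefix} and \ref{proposition:suffix}. For the vertex-cover part, suppose $S$ is not a vertex cover, so there is an edge $uv$ with $u,v \notin S$. Starting from any $x$ with $x_{uv} = 11$, no update in $w$ ever touches $u$ or $v$ (since $[w] = S$), so $y = \Kernel^w(x)$ still has $y_{uv} = 11$, whence $y \notin \IndependentSets(G) \supseteq \Kernels(G)$, contradicting that $w$ fixes $\Kernel(G)$. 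For the non-dominion part, suppose $S$ is a dominion, witnessed by an independent set $I$ and a vertex $v \notin S$ with $[w] \cap \Neighbourhood(v) \subseteq \Neighbourhood(I)$. Take $x$ with $x_I = 1$ and $x_{V\setminus I} = 0$; exactly as in the proof of Proposition \ref{proposition:suffix}, every $u \in [w]\cap\Neighbourhood(v)$ has a neighbour in $I$ and so is driven to $0$, while $v$ itself is never updated, giving $y_{\Neighbourhood[v]} = 0$. Thus $v$ is a vertex with no neighbour in $\one(y)$ and $v \notin \one(y)$, so $\one(y)$ is not a dominating set, hence $y \notin \Kernels(G)$, again contradicting that $w$ fixes $\Kernel(G)$.

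The main subtlety I anticipate is in this last step: the argument in Proposition \ref{proposition:suffix} was stated for words that suffix the kernel network, where the initial configuration $y$ is assumed to already be an independent set, whereas here $w$ is an arbitrary fixing word acting on the chosen $x$. I would need to check that the configuration $x$ with support $I$ is itself independent (which it is, since $I$ is independent by hypothesis) so that the same computation showing $y_{\Neighbourhood[v]} = 0$ goes through verbatim; the key point is simply that $x_{\Neighbourhood[v]} = 0$ (because $v \notin I$ and $v$'s neighbours meeting $S$ lie in $\Neighbourhood(I)$, not in $I$) and that updates only at vertices of $S$ cannot raise any coordinate in $\Neighbourhood[v]$ to $1$. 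Once this monotone-style invariant is verified, all three implications close and the equivalence follows.
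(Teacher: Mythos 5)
Your architecture is sound and, for two of the three implications, essentially identical to the paper's: \ref{item:words_prefix_suffix} $\implies$ \ref{item:word_fixing} by instantiation, and \ref{item:subset} $\implies$ \ref{item:words_prefix_suffix} by citing Propositions \ref{proposition:prefix} and \ref{proposition:suffix} and concatenating a prefixing word with a suffixing word. For \ref{item:word_fixing} $\implies$ \ref{item:subset}, however, the paper has a one-line argument you missed: a word $w$ that fixes $\Kernel(G)$ automatically \emph{prefixes} it (since $\Kernel^w(x) \in \Kernels(G) \subseteq \IndependentSets(G)$ for all $x$) and automatically \emph{suffixes} it (independent configurations are particular configurations), so Propositions \ref{proposition:prefix} and \ref{proposition:suffix} apply as black boxes and immediately give that $S = [w]$ is a vertex cover and a non-dominion. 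You instead replay the bad-configuration constructions inline, and that is where a genuine gap enters.

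In your dominion contraposition, the asserted equality $x_{\Neighbourhood[v]} = 0$ does not follow from your justification. Your parenthetical handles $v$ itself and the neighbours of $v$ lying in $S$ (these lie in $\Neighbourhood(I)$, hence outside $I$ by independence), but it is silent on neighbours $t \in \Neighbourhood(v) \setminus S$. Nothing in the dominion definition prevents such a $t$ from belonging to $I$; if $t \in I$ then $x_t = 1$, and since $t \notin S = [w]$ it is never updated, so $y_t = 1$, the vertex $v$ is dominated in $y$, and your contradiction evaporates. (Concretely, in the triangle on $\{v,u,t\}$ with $S = \{u\}$, the set $S$ is a dominion witnessed by $I = \{t\}$, yet your chosen $x$ satisfies $\Kernel^w(x) = x \in \Kernels(G)$, so the step ``giving $y_{\Neighbourhood[v]} = 0$'' is simply false there.) The patch is available from within your own proof: you have already established, unconditionally, that $S$ is a vertex cover; since $v \notin S$, every edge $vt$ must then be covered by $t$, i.e.\ $\Neighbourhood(v) \subseteq S$, whence $\Neighbourhood(v) = S \cap \Neighbourhood(v) \subseteq \Neighbourhood(I)$ is disjoint from $I$, so $x_{\Neighbourhood[v]} = 0$ does hold and your persistence invariant (ones survive, and independence is preserved, under sequential kernel updates from an independent configuration) closes the argument. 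So either supply this vertex-cover patch explicitly, or follow the paper and cite Proposition \ref{proposition:suffix} after observing that fixing implies suffixing; as written, the step is unjustified and, taken in isolation, can fail.
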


\begin{proof}
Clearly, $\ref{item:words_prefix_suffix} \implies \ref{item:word_fixing}$. We now prove $\ref{item:word_fixing} \implies \ref{item:subset}$. Since $w$ prefixes $\Kernel( G )$, $S = [w]$ is a vertex cover by Proposition \ref{proposition:prefix}; similarly, since $w$ suffixes $\Kernel( G )$, $S = [w]$ is a non-dominion by Proposition \ref{proposition:suffix}. Finally, we prove $\ref{item:subset} \implies \ref{item:words_prefix_suffix}$. Since $S$ is a vertex cover, then by Proposition \ref{proposition:prefix} $w^\mathrm{p}$ prefixes $\Kernel( G )$; similarly, by Proposition \ref{proposition:suffix} $w^\mathrm{s}$ suffixes $\Kernel( G )$. Therefore, $w^\mathrm{p} w^\mathrm{s}$ fixes $\Kernel( G )$.
\qed \end{proof}

Let \textsc{Fixing Set} be the decision problem, where the instance is $(G, S)$ and the question is: does there exist a word $w$ with $[w] = S$ that fixes $\Kernel( G )$, or equivalently is $S$ a vertex cover and a non-dominion?

\begin{theorem} \label{theorem:complexity_fixing_set}
\textsc{Fixing Set} is NP-hard.
\end{theorem}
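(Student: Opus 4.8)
The plan is to prove NP-hardness of \textsc{Fixing Set} by reducing from a known NP-hard problem about the existence of a set that is simultaneously a vertex cover and a non-dominion. By Proposition~\ref{proposition:fixing_sets}, $(G,S)$ is a yes-instance of \textsc{Fixing Set} precisely when $S$ is both a vertex cover and a non-dominion, so the task reduces to showing that recognising such sets is hard. The earlier Theorem~\ref{theorem:complexity_suffix} already establishes that \textsc{Non-Dominion} is coNP-complete, equivalently that \textsc{Dominion} is NP-complete; the natural strategy is to leverage the same reduction chain (\textsc{Set Cover} $\to$ \textsc{Colony} $\to$ \textsc{Dominion}) but to arrange the constructed graph so that the target set is automatically a vertex cover, thereby isolating the non-dominion condition as the sole source of difficulty.

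First I would take an instance of the hardness source for \textsc{Dominion}/\textsc{Non-Dominion} and augment the graph $G$ so that the candidate set $S$ is forced to be a vertex cover without altering whether it is a non-dominion. Concretely, for any edge $uv$ with both endpoints outside $S$, one can attach a pendant or subdivide in a controlled way so that covering is guaranteed; alternatively, one adds to $S$ only vertices that are isolated or whose neighbourhoods lie entirely within $S$, so that the vertex-cover property holds trivially while the colony/dominion structure of the original gadget is preserved. The key point is that whether $S$ is a non-dominion depends on the existence of a vertex $v \notin S$ for which $S \cap \Neighbourhood(v)$ is a colony of $G - v$ (Proposition~\ref{proposition:suffix}); I would ensure the padding vertices added to $S$ do not create new such witnesses $v$, so that the non-dominion status is inherited unchanged from the base instance.

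The main obstacle I anticipate is decoupling the two conditions: vertex cover is a \emph{monotone} upward-closed property (supersets of vertex covers are vertex covers), whereas non-dominion is not monotone in the same direction, so naively enlarging $S$ to force the cover property risks destroying the non-dominion property by introducing a fresh colony. Handling this carefully — choosing gadget vertices whose inclusion in $S$ cannot serve as, or enable, a dominating witness $v$ — is where the argument must be precise. I would manage this by attaching the cover-forcing gadgetry via vertices that are pairwise non-adjacent and adjacent only into $S$, so each such vertex, if taken as a candidate witness $v \notin S$, has $S \cap \Neighbourhood(v)$ empty or manifestly not a colony of $G - v$, and hence cannot turn a non-dominion into a dominion.

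Finally, I would verify correctness of the reduction in both directions: a yes-instance of the base problem maps to a set that is a non-dominion (by faithfulness of the gadget) and a vertex cover (by construction), hence a yes-instance of \textsc{Fixing Set}; conversely, any $S$ that is a vertex cover and a non-dominion in the augmented graph must correspond to a non-dominion in the original, giving a yes-instance of the base problem. Since the padding adds only polynomially many vertices and edges, the reduction is polynomial-time, and NP-hardness of \textsc{Fixing Set} follows. I would note that we claim only NP-hardness rather than NP-completeness, since membership in NP is not obvious: a natural certificate would verify the vertex-cover condition easily but checking the non-dominion condition appears to require ruling out all witnesses, which is itself a coNP-type task.
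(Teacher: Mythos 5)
There is a genuine gap, and it sits exactly at the point you flagged as the ``main obstacle'' but then dismissed with hand-waving. Your overall frame is the paper's: by Proposition~\ref{proposition:fixing_sets} it suffices to show that recognising sets that are simultaneously vertex covers and non-dominions is hard, and the source problem is \textsc{Non-Dominion} from Theorem~\ref{theorem:complexity_dominion}. But the crux of the proof is the construction that forces the vertex-cover property without perturbing the dominion structure, and neither of your proposed gadgets achieves this. The uncovered edges are precisely those with both endpoints in $T = V \setminus S$. A gadget vertex that is ``pairwise non-adjacent and adjacent only into $S$'' cannot cover such an edge at all, so your second suggestion simply fails to produce a vertex cover. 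Your first suggestion (pendants or subdivisions on $T$--$T$ edges, with the new vertices added to the candidate set) does yield a cover, but it breaks the equivalence: if $z$ subdivides an edge $tu$ with $t,u \in T$, then for a candidate witness $t$ the set $\hat{S} \cap \Neighbourhood(t)$ now contains $z$, and dominating $z$ within $G - t$ forces $u$ into the independent set $I$. So the colony condition for the witness $t$ is strictly more constrained than in the original instance, and a dominion in $G$ need not map to a dominion in the new graph --- the no-instance direction of your reduction is unsound. Your closing sentence, analysing the gadget vertices ``if taken as a candidate witness $v \notin S$,'' is also confused: those vertices are added \emph{to} the candidate set, so they are never witnesses; the witnesses remain the $T$-vertices, whose $S$-neighbourhoods your gadgets have just altered.

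The paper's construction resolves this tension with an idea absent from your sketch: make one disjoint copy $G_t$ of the graph \emph{per} vertex $t \in T$, and in copy $G_t$ surgically modify only $t$ itself. Concretely, $t$ is replaced by a vertex $\hat{t}$ retaining exactly its edges into (the copy of) $S$, its edges into $T$ being deleted, and the target set $\hat{S}$ is everything except the vertices $\hat{t}$. Then $\hat{G} - \hat{S}$ is edgeless, so $\hat{S}$ is a vertex cover for free --- including covering all $T$--$T$ edges, because copies of $T$-vertices are placed \emph{inside} $\hat{S}$. This does not disturb the dominion condition, because the only possible witnesses are the vertices $\hat{t}$, and for each one $\hat{S} \cap \Neighbourhood(\hat{t}) = (S \cap \Neighbourhood(t))_t$ exactly (the deleted $t$--$T$ edges are what guarantees no copied $T$-vertices sneak into this set), while $G_t - \hat{t}$ is an exact copy of $G - t$, so $\hat{t}$ witnesses a dominion in $\hat{G}$ if and only if $t$ witnesses one in $G$. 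The reason one copy per $t$ is needed --- rather than deleting all $T$--$T$ edges in a single copy --- is that deleting edges not incident with the witness $t$ would change $G - t$ and hence the colony structure. This per-witness localisation of the surgery is the missing idea in your proposal; without it, or a worked-out substitute with both directions of the equivalence verified, the proof does not go through.
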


\begin{proof}
The proof is by reduction \textsc{Non-Dominion} (which is coNP-complete) to \textsc{Fixing Set}. The full proof is given in Appendix \ref{appendix:proof_of_theorem:complexity_fixing_set}. 
\end{proof}

We now finalise the proof of Theorem \ref{theorem:complexity_fixing_word_K}.


\begin{proof}[of Theorem \ref{theorem:complexity_fixing_word_K}]
\textsc{Fixing Word} is in coNP; the certificate being a configuration $x$ such that $\Kernel^w( x ) \notin \Kernels( G )$. The proof of hardness is by reduction from \textsc{Fixing Set}, which is NP-hard, as shown in Theorem \ref{theorem:complexity_fixing_set}. Let $(G, S)$ be an instance of \textsc{Fixing Set}, then consider the instance $(G, w = \omega \omega)$ of \textsc{Fixing Word}, where $\omega$ is a permutation of $S$. Then Proposition \ref{proposition:fixing_sets} shows that $w$ fixes $\Kernel( G )$ if and only if $S$ is a vertex cover and a non-dominion.
\end{proof}

The complexity of determining the length of a shortest fixing word for the kernel network remains open.

\begin{question} \label{question:complexity_shortest_fixing_word_K}
What is the complexity of the following optimisation problem: given $G$, what is the length of a shortest fixing word for $\Kernel(G)$?
\end{question}


\section{Graphs with a permis} \label{subsection:permis}

Let $G = (V,E)$ be a graph. The greedy algorithm to find a maximal independent set of $G$ fixes the initial configuration $x$ to $0$, and varies the permutation $w$, while always obtaining a maximal independent set $y \in \Kernels( G )$. We now turn the tables, and instead consider fixing the permutation to some $w$ and varying the initial configuration $x$; we want to find a permutation that guarantees that we always obtain a maximal independent set $y \in \Kernels( G )$. As such, we call a permutation of $G$ that fixes $\Kernel(G)$ a \Define{permis} for $G$.

We now investigate which graphs have a permis. We first exhibit large classes of graphs that do have a permis in Theorem \ref{theorem:good_graphs}. For that purpose, we need to review some graph theory first.

A graph is a \Define{comparability graph} if there exists a partial order $\sqsubseteq$ on $V$ such that $uv \in E$ if and only if $u \sqsubset v$. The following are comparability graphs: complete graphs, bipartite graphs, permutation graphs, and interval graphs.

A vertex is \Define{simplicial} if its neighbourhood is a clique, i.e. if $\Neighbourhood[ s ] \subseteq \Neighbourhood[ v ]$ for all $v \in \Neighbourhood[ s ]$.

We now introduce an operation on graphs, that we call \Define{graph composition}. Let $H$ be an $n$-vertex graph, $G_1, \dots, G_n$ other graphs, then the composition $H(G_1, \dots, G_n)$ is obtained by replacing each vertex $v$ of $H$ by the graph $G_v$, and whenever $uv \in E(H)$, adding all edges between $G_u$ and $G_v$. This construction includes for instance the disjoint union of two graphs: $G_1 \cup G_2 = \bar{K}_2(G_1, G_2)$; the full union with all edges between $G_1$ and $G_2$: $K_2(G_1, G_2)$; adding an open twin (a new vertex $v'$ with $\Neighbourhood( v' ) = \Neighbourhood( v )$ for some vertex $v$ of $H$): $H(K_1, \dots, K_1, \bar{K_2}, K_1, \dots, K_1)$; similarly, adding a closed twin ($\Neighbourhood[ v' ] = \Neighbourhood[ v ]$).


\begin{theorem} \label{theorem:good_graphs}
Let $G$ be a graph. If $G$ satisfies any of the following properties, then $G$ has a permis:
\begin{enumerate}
    \item \label{item:seven_vertices}
    $G$ has at most seven vertices, and is not the heptagon $C_7$;
    
    \item \label{item:comparability}
    $G$ is a comparability graph; 

	\item \label{item:simplicial}
    the set of simplicial vertices of $G$ is a dominating set;

	\item \label{item:composition}
    $G = H(G_1, \dots, G_n)$, where each of $H, G_1, \dots, G_n$ has a permis.
\end{enumerate}
\end{theorem}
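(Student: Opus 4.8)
\textbf{Proof plan for Theorem~\ref{theorem:good_graphs}.}
The plan is to establish each of the four properties separately, since they describe independent sufficient conditions. For item~\ref{item:comparability}, I would exploit the linear extension of the partial order $\sqsubseteq$ witnessing the comparability graph. My guess is that the permutation $w$ obtained by listing the vertices in an order consistent with $\sqsubseteq$ (or its reverse) is a permis. The intuition is that in a comparability graph, once we process vertices in an order respecting $\sqsubseteq$, the update rule $\Kernel(x)_v = \bigmeet_{u \sim v} \neg x_u$ should never ``undo'' a decision that will later be violated, because the neighbours of $v$ that were already processed form a chain-like structure. The key step is to verify that for any initial configuration $x$, the output $\Kernel^w(x)$ is already a kernel; by Proposition~\ref{proposition:fixing_sets} it suffices to check that $[w] = V$ is both a vertex cover (trivial, since $w$ is a permutation) and a non-dominion, but since $w$ is a \emph{single} permutation rather than the doubled word $ww$, I expect the real work is to show directly that $\Kernel^w$ maps everything into $\Kernels(G)$.

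For item~\ref{item:simplicial}, I would use the fact that a simplicial vertex $s$ has a clique neighbourhood, so at most one vertex of $\Neighbourhood[s]$ can be in any independent set. The plan is to order the vertices so that every simplicial vertex is updated \emph{last} among its closed neighbourhood (or first, depending on which direction works); because the simplicial vertices dominate $G$, every vertex lies in $\Neighbourhood[s]$ for some simplicial $s$, and controlling the simplicial vertices should suffice to force domination of the whole graph. The key step is to argue that after processing in this order, every vertex $v$ is dominated: either $v$ itself ends at state $1$, or some neighbour does, and the simpliciality of the dominating set prevents two adjacent vertices from both being switched on.

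For item~\ref{item:composition}, I would proceed compositionally. Given permes $w_H$ for $H$ and $w_{G_1}, \dots, w_{G_n}$ for the factors, I would build a permutation $w$ of $H(G_1, \dots, G_n)$ by concatenating the blocks $w_{G_v}$ in the order dictated by $w_H$. The intuition is that the full edges inserted between adjacent factors $G_u, G_v$ in $H$ make the composed graph behave, at the coarse level of ``which factors are switched on,'' exactly like $H$ under $\Kernel(H)$, while within each factor the internal permis resolves the local structure; a maximal independent set of the composition picks one factor-block that is ``active'' within each $H$-independent set and takes a kernel inside that block. The main obstacle, which I expect to be the hardest part of the whole theorem, is precisely this composition step: I must verify that the interleaving respects both levels simultaneously, i.e.\ that once the ``active factors'' have been selected consistently with $w_H$ acting on $\Kernel(H)$, running each internal permis $w_{G_v}$ produces a kernel of $G_v$ inside the active factors and the all-zero restriction elsewhere, and that these local kernels assemble into a global kernel. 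Item~\ref{item:seven_vertices} I would dispatch by a finite case analysis or computer search over the finitely many graphs on at most seven vertices, checking that each admits a permis and that $C_7$ genuinely does not (the latter presumably shown separately as a small counterexample).
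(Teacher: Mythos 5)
Your plan matches the paper's proof for items \ref{item:seven_vertices}, \ref{item:comparability} and \ref{item:composition}. For \ref{item:seven_vertices} the paper also uses computer search (with $C_7$ excluded by Proposition~\ref{proposition:bad_cycles}). For \ref{item:comparability} you pick exactly the paper's permutation, a linear extension of $\sqsubseteq$; the verification you defer is a genuine argument, not routine: assuming $y_{\Neighbourhood[w_i]} = 0$, one chases, for each offending vertex, the \emph{latest} neighbour that held state $1$ just before that vertex's update, and transitivity of $\sqsubseteq$ (if $w_i \sqsubset w_j \sqsubset w_l$ then $w_i w_l \in E$) forces an infinite descent among finitely many indices. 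Your ``chain-like structure'' intuition is pointing at precisely this use of transitivity, so the approach would go through. For \ref{item:composition} you differ mildly in organisation: you compose all factors simultaneously, whereas the paper blows up one vertex $b$ at a time into $G_b$ and inducts, tracking the quotient configuration $\hat{x}$ with $\hat{x}_{\hat{w}_b} = \bigjoin_{v \in G_b} x_v$ --- exactly your ``which factors are switched on'' abstraction. The one-at-a-time reduction buys much simpler bookkeeping: $V_b$ is tethered, so either some outside neighbour holds $1$ and the whole block zeroes out, or the block evolves as if isolated under its own permis; handling all blocks at once, as you propose, would force you to prove this dichotomy for every block within one interleaved induction. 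Your description of kernels of the composition (independent support in $H$, kernels inside active blocks) is correct and is implicitly what the paper's argument certifies.

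The genuine gap is in item \ref{item:simplicial}. Your ordering requirement --- ``every simplicial vertex is updated last among its closed neighbourhood'' --- is unsatisfiable whenever two simplicial vertices are adjacent: in a triangle, or in any graph with closed twins, two adjacent simplicial vertices would each have to come after the other. The missing idea is the paper's Lemma~\ref{lemma:simplicial_MIS}: since the simplicial vertices form a dominating set, they \emph{contain} a maximal independent set $M$ (take a minimal dominating subset of the simplicial vertices; two adjacent simplicial vertices $t, t'$ satisfy $\Neighbourhood[t] = \Neighbourhood[t']$, so one of them is redundant). The permis then updates $V \setminus M$ first and $M$ last; because $M$ is independent, each $m \in M$ automatically follows all of $\Neighbourhood(m)$, with no per-vertex ordering constraint needed. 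The contradiction argument also runs through $M$, not through individual simplicial vertices as in your sketch: if $y_{\Neighbourhood[v]} = 0$ for some $v$, domination by $M$ gives $m \in M$ with $v \in \Neighbourhood[m]$, and simpliciality gives $\Neighbourhood[m] \subseteq \Neighbourhood[v]$, hence $y_{\Neighbourhood[m]} = 0$; but every neighbour of $m$ is updated before $m$ and never changes afterwards, so at $m$'s update its neighbourhood reads all zeros and $m$ is set to $1$, a contradiction. Two smaller inaccuracies in your sketch: the correct direction is definitely \emph{last} (your ``or first'' hedge leaves the key choice unresolved), and your final sentence misallocates the work --- independence of the output is automatic for \emph{any} permutation, since $[w] = V$ is a vertex cover (Proposition~\ref{proposition:prefix}); the only property at stake in item \ref{item:simplicial} is domination.
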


\begin{proof}[Sketch]
The proof of \ref{item:seven_vertices} is by computer search. For \ref{item:comparability}, the permis goes through the vertices ``from lowest to highest'' according to $\sqsubseteq$. For \ref{item:simplicial}, $G$ has a maximal independent set $M$ of simplicial vertices, and the permis visits $M$ last. For \ref{item:composition}, we can reduce ourselves to the case where only one vertex $b$ is blown up into a graph $G_b$. Then the permis for $G$ is obtained by taking the permis for $H$ and replacing the update of $b$ by a permis for $G_b$. Everything works as though the other vertices see $\bigjoin_{v \in G_b} x_v$. The full proof is given in Appendix \ref{appendix:proof_of_theorem:good_graphs}.



\end{proof}










We now exhibit classes of graphs without a permis. As mentioned in Theorem \ref{theorem:good_graphs}, the smallest graph without a permis is the heptagon.

\begin{proposition} \label{proposition:bad_cycles}
For all $2k+1 \ge 7$, the odd hole $C_{2k+1}$ does not have a permis.
\end{proposition}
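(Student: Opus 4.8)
The plan is to reduce the statement to a purely dynamical claim and then split into one immediate and one delicate case. Since any permutation $w$ visits every vertex, $[w]=V$ is a vertex cover, so by Proposition~\ref{proposition:prefix} the configuration $\Kernel^w(x)$ is independent for every $x$. Hence $w$ is a permis if and only if every such output is \emph{also} dominating, which on a cycle is equivalent to the statement that no output contains three consecutive $0$s. So it suffices to exhibit, for an arbitrary permutation $w$ of $C_{2k+1}$, one initial configuration whose image under $\Kernel^w$ has a vertex $v$ with $v-1,v,v+1$ all equal to $0$.

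First I would analyse the all-ones input. Orient each edge of the cycle towards its later-updated endpoint and call a vertex a \emph{peak} if both incident edges point into it, i.e.\ it is updated after both its neighbours. One checks directly that $\one\!\left(\Kernel^{w}(1)\right)$ is exactly the set of peaks: a vertex with a later-updated neighbour is set to $0$, since that neighbour still holds its initial value $1$ at the relevant moment; whereas a peak sees its two earlier-updated neighbours, which are non-peaks and hence already $0$, so the peak is set to $1$. Peaks are pairwise non-adjacent, so this set is independent; consequently, if $w$ is a permis then the peak set must be a kernel, which forces consecutive peaks to be separated by at most two vertices. Equivalently, every maximal run of non-peaks, which I will call a \emph{valley}, has length $1$ or $2$. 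If some valley had length $\ge 3$, then the all-ones input would already produce three consecutive $0$s; so we may assume all valleys have length $1$ or $2$.

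The parity of the cycle now enters. Writing $s$ for the number of peaks, which equals the number of valleys, we have $2k+1 = s + \sum_i g_i$ with each $g_i\in\{1,2\}$, so the number of length-$2$ valleys equals $(2k+1)-2s$, which is odd; in particular at least one valley has length exactly $2$. In this remaining case I would build the bad input from the all-ones configuration by flipping a small set of coordinates to $0$, so as to switch off one peak $M$ bordering a length-$2$ valley while planting a single surviving $1$ at a neighbouring source; this fuses $M$ with an adjacent valley into a block of at least three consecutive $0$s. For instance, when $M$ borders a length-$2$ valley on one side and a length-$1$ valley on the other, one can zero a suitable pair of originals so that the length-$1$ valley contributes the only nearby $1$, while $M$ together with the two vertices of the length-$2$ valley are forced to $0$, giving the desired hole.

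The main obstacle is making this flip succeed for every arrangement of valley lengths and \emph{orientations}: the planted $1$ must be placed so that it does not re-dominate the intended hole, and whether the forced zeros land on the correct side depends on which endpoint of each valley is its source. Two geometric sub-cases arise: either two cyclically consecutive valleys have different lengths, which is handled as above by choosing $M$ at their junction, or all valleys have length exactly $2$, which forces $2k+1 = 3s$ with $s$ odd and is handled by an analogous flip that uses the extra room inside a length-$2$ valley, the admissible choice of peak to switch off being guaranteed by the odd count of valleys. The all-length-$1$ possibility is excluded since it would make $2k+1$ even, so these sub-cases are exhaustive. This is precisely the step where $2k+1 \ge 7$ is essential: for $C_5$ the only admissible structure has $s=2$ and valley lengths $\{1,2\}$, and there is simply not enough room to open a run of three $0$s after any single flip, which is exactly why $C_5$ does admit a permis while the larger odd holes do not.
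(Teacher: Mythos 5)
Your first half is sound: the reduction of ``permis'' to ``no output has three consecutive zeros'' (via Proposition~\ref{proposition:prefix}), the identification of $\one(\Kernel^w(1))$ with the set of peaks, the conclusion that all valleys have length $1$ or $2$, and the parity count showing the number of length-$2$ valleys is $(2k+1)-2s$, hence odd and at least $1$, are all correct. This brings you to the same crux as the paper's proof, in different clothing: a length-$2$ valley contains exactly one vertex with one in-arc and one out-arc in the update orientation (i.e.\ two consecutive arcs in the same direction), and you must show that every such local picture admits an input whose image has three consecutive zeros. But precisely there the proposal stops being a proof: ``one can zero a suitable pair of originals'', ``handled by an analogous flip'', and ``the admissible choice of peak \dots guaranteed by the odd count of valleys'' are assertions with no verification, and they are not routine, because flips have knock-on effects on the dynamics. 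Concretely, take a length-$2$ valley $u,v$ between peaks $p,q$ with arcs $u \to v \to q$. To switch off the peak $q$ you must make one of its neighbours finish at $1$; setting $x_p=1$, $x_q=0$ does force $y_q=0$, but only by making $y_v=1$, which closes the intended hole at $\{u,v,q\}$, and the planted zero $x_q=0$ also alters the update of the next non-peak $r$ (whose later-updated neighbour is $q$), so that whether any run of three zeros survives depends on the lengths \emph{and orientations} of the neighbouring valleys. Switching off $p$ instead (via $x_p=x_v=0$) runs into the symmetric problem on the other side. You name this obstacle yourself, but naming it is not resolving it: the exhaustive check over these orientation patterns is the entire content of the missing step, and your ``all valleys have length $2$'' sub-case in particular rests on an unexplained appeal to the oddness of $s$.

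For comparison, the paper proves the \emph{stronger local} statement that in a permis orientation no two consecutive arcs in the same direction can occur anywhere (not merely that parity forces one somewhere and that one such occurrence can be exploited): given arcs $c \to b \to a$, it splits into four cases according to the orientations of up to three preceding arcs on the vertices $f,e,d$, and in each case writes down an explicit partial configuration $x$ and verifies that $y$ has three consecutive zeros. Alternation of arc directions then contradicts the oddness of the cycle. The hypothesis $2k+1 \ge 7$ is used exactly to keep the six vertices $a,\dots,f$ distinct, which matches your (also unverified, but correct) observation about $C_5$. So your strategy is viable in outline and its structural preliminaries are a genuinely nice alternative framing, but to complete it you would essentially have to reproduce the paper's four-case verification inside each of your sub-cases; as written, the gap sits at the heart of the proof rather than in a finishing detail.
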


\begin{proof}
Let $w$ be a permutation, and orient the edges such that $a \to b$ if and only if $a = w_i$, $b = w_j$ with $j > i$. We shall prove that there cannot be two consecutive arcs in the same direction; this shows that the direction of arcs must alternate, which is impossible because there is an odd number of arcs in the cycle. We do this by a case analysis on the arcs preceding those two consecutive arcs.

We consider six vertices $f, e, d, c, b, a$, where the last two arcs $c \to b \to a$ are in the same direction. The first case is where $d \to c$. In that case, if $(x_a, x_b, x_c) = (1,1,1)$, then $(y_b, y_c, y_d) = (0,0,0)$ as shown in Case 1 below along with the other three cases.

\begin{tikzpicture}
\begin{scope}[yshift=0, yscale=0.5]
    \node (x1) at (-1,2) {$x$};
    \node (xc1) at (3,2) {1};
    \node (xb1) at (4,2) {1};
    \node (xa1) at (5,2) {1};

    \node (C1) at (-2,2) {Case 1};
    \node (d1) at (2,1) {$d$};
    \node (c1) at (3,1) {$c$};
    \node (b1) at (4,1) {$b$};
    \node (a1) at (5,1) {$a$};

    \draw[-latex] (d1) -- (c1);
    \draw[-latex] (c1) -- (b1);
    \draw[-latex] (b1) -- (a1);

    \node (y1) at (-1,0) {$y$};
    \node (yd1) at (2,0) {0};
    \node (yc1) at (3,0) {0};
    \node (yb1) at (4,0) {0};
\end{scope}

\begin{scope}[yshift=-1.7cm, yscale=0.5]
    \node (x2) at (-1,2) {$x$};
    \node (xe2) at (1,2) {1};
    \node (xd2) at (2,2) {1};
    \node (xb2) at (4,2) {1};
    \node (xa2) at (5,2) {1};

    \node (C2) at (-2,2) {Case 2};
    \node (e2) at (1,1) {$e$};
    \node (d2) at (2,1) {$d$};
    \node (c2) at (3,1) {$c$};
    \node (b2) at (4,1) {$b$};
    \node (a2) at (5,1) {$a$};

    \draw[-latex] (d2) -- (e2);
    \draw[-latex] (c2) -- (d2);
    \draw[-latex] (c2) -- (b2);
    \draw[-latex] (b2) -- (a2);

    \node (y2) at (-1,0) {$y$};
    \node (yd2) at (2,0) {0};
    \node (yc2) at (3,0) {0};
    \node (yb2) at (4,0) {0};
\end{scope}

\begin{scope}[yshift=-3.4cm, yscale=0.5]
    \node (x3) at (-1,2) {$x$};
    \node (xe3) at (1,2) {1};
    \node (xd3) at (2,2) {0};
    \node (xb3) at (4,2) {1};
    \node (xa3) at (5,2) {1};

    \node (C3) at (-2,2) {Case 3};
    \node (f3) at (0,1) {$f$};
    \node (e3) at (1,1) {$e$};
    \node (d3) at (2,1) {$d$};
    \node (c3) at (3,1) {$c$};
    \node (b3) at (4,1) {$b$};
    \node (a3) at (5,1) {$a$};

    \draw[-latex] (f3) -- (e3);
    \draw[-latex] (e3) -- (d3);
    \draw[-latex] (c3) -- (d3);
    \draw[-latex] (c3) -- (b3);
    \draw[-latex] (b3) -- (a3);

    \node (y3) at (-1,0) {$y$};
    \node (yf3) at (0,0) {0};
    \node (ye3) at (1,0) {1};
    \node (yd3) at (2,0) {0};
    \node (yc3) at (3,0) {0};
    \node (yb3) at (4,0) {0};
\end{scope}

\begin{scope}[yshift=-5.1cm, yscale=0.5]
    \node (x4) at (-1,2) {$x$};
    \node (xf4) at (0,2) {0};
    \node (xd4) at (2,2) {0};
    \node (xb4) at (4,2) {1};
    \node (xa4) at (5,2) {1};

    \node (C4) at (-2,2) {Case 4};
    \node (f4) at (0,1) {$f$};
    \node (e4) at (1,1) {$e$};
    \node (d4) at (2,1) {$d$};
    \node (c4) at (3,1) {$c$};
    \node (b4) at (4,1) {$b$};
    \node (a4) at (5,1) {$a$};

    \draw[-latex] (e4) -- (f4);
    \draw[-latex] (e4) -- (d4);
    \draw[-latex] (c4) -- (d4);
    \draw[-latex] (c4) -- (b4);
    \draw[-latex] (b4) -- (a4);

    \node (y4) at (-1,0) {$y$};
    \node (ye4) at (1,0) {1};
    \node (yd4) at (2,0) {0};
    \node (yc4) at (3,0) {0};
    \node (yb4) at (4,0) {0};
\end{scope}

\end{tikzpicture}

\qed \end{proof}


Say a set of vertices $S$ is \Define{tethered} if there is an  edge $st$ between any $s \in S$ and and any $t \in T = \Neighbourhood(S) \setminus S$.

\begin{proposition} \label{proposition:blocking_set}
Let $G$ be a graph. If $G$ has a tethered set of vertices $S$ such that $G[ S ]$ has no permis, then $G$ has no permis.
\end{proposition}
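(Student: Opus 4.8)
The plan is to argue by contradiction: suppose $w$ is a permis for $G$, and produce a permis for $G[S]$, contradicting the hypothesis. The candidate is $w'$, the subsequence of $w$ obtained by deleting every letter outside $S$; since $w$ is a permutation of $V$, $w'$ is a permutation of $S = V(G[S])$. I must then show that $\Kernel(G[S])^{w'}(x')$ is a maximal independent set of $G[S]$ for \emph{every} configuration $x'$ on $S$.

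The first step is to record what being tethered buys us. For each $s \in S$ we have $\Neighbourhood(s) \subseteq S \cup T$ together with $T \subseteq \Neighbourhood(s)$, so $\Neighbourhood(s) = (\Neighbourhood(s) \cap S) \sqcup T$; in particular the neighbourhood of $s$ inside $G[S]$ is exactly $\Neighbourhood(s) \cap S$, and every $s \in S$ is adjacent to all of $T$. The key consequence is that if no vertex of $T$ is ever switched on during a run, then each update of a vertex $s \in S$ in $G$ coincides with its update in $G[S]$, since the extra conjuncts $\bigmeet_{t \in T} \neg x_t$ all equal $1$.

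The heart of the argument is a lift-and-track step. Given $x' \neq 0$, I would lift it to the configuration $x$ on $G$ with $x_S = x'$ and $x_{V \setminus S} = 0$, set $y = \Kernel(G)^w(x)$, and prove by induction along $w$ the invariant ``all of $T$ is $0$, and the $S$-coordinates agree with the corresponding step of the $\Kernel(G[S])^{w'}$-run from $x'$''. Updates of vertices outside $S \cup T$ are irrelevant (they are non-adjacent to $S$ and do not touch $T$); updates in $S$ preserve the invariant by the previous paragraph; and the only danger is a vertex $t \in T$ switching on, which requires all of $\Neighbourhood(t)$, and hence all of $S$, to be $0$ at that moment. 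This is the main obstacle, and it is dissolved by the observation that the kernel network can never pass from a nonzero configuration to $0$ in a single update, because the updated vertex, seeing only zeros in its neighbourhood, switches to $1$. Therefore the $\Kernel(G[S])^{w'}$-run started from a nonzero $x'$ stays nonzero at every step, so by the invariant $S$ always carries a one whenever a $T$-vertex is updated; every such $t$ stays $0$, and the invariant survives.

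Consequently $y_T = 0$ and $y_S = \Kernel(G[S])^{w'}(x')$. Since $w$ is a permis, $y$ is a maximal independent set of $G$; thus $y_S$ is independent in $G[S]$, and for any $s \in S$ with $y_s = 0$ the dominating neighbour of $s$ in $y$ lies in $\Neighbourhood(s) \subseteq S \cup T$ and cannot lie in $T$ (as $y_T = 0$), so it lies in $S$. Hence $y_S$ is also dominating, i.e.\ a maximal independent set of $G[S]$. This settles every $x' \neq 0$; the case $x' = 0$ is immediate, since $\Kernel(G[S])^{w'}(0)$ is exactly the output of the greedy algorithm and is therefore always a maximal independent set of $G[S]$. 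Thus $w'$ is a permis for $G[S]$, the desired contradiction, and the crux of the whole argument is precisely the no-one-step-to-zero observation that keeps the vertices of $T$ switched off.
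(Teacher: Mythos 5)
Your proof is correct and follows essentially the same route as the paper's: the same lift of a configuration on $S$ to $G$ with $T$ zeroed out, the same induction tracking that the $S$-coordinates follow the $\Kernel(G[S])$-run while tethering keeps $T$ at zero, and the same key observation that a single kernel update cannot send a nonzero configuration to $0$ (which the paper asserts without proof and you make explicit, also handling the $x'=0$ case via the greedy argument). The only difference is cosmetic: you argue the contrapositive (a permis for $G$ induces a permis for $G[S]$), whereas the paper starts from a configuration witnessing that $G[S]$ has no permis and lifts it to defeat every permutation of $G$.
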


\begin{proof}[Sketch]
Let $x$ be a configuration such that $x_S \ne 0$ and $x_T = 0$. Then updating $T$ will have no effect, and we have $y^a_T = 0$ throughout (for every $0\leq a \leq l$). Therefore, the updates in $S$ are the same as the updates in $G[S]$: $\Kernel^w( x; G )_S = \Kernel^{ \hat{w} }( x_S; G )$, where $\hat{w}$ represents the updates of $S$ only. Since $\hat{w}$ does not fix $G[S]$, $w$ does not fix $G$. The full proof is Appendix \ref{appendix:proof_of_proposition:blocking_set}.
\end{proof}

Propositions \ref{proposition:bad_cycles} and \ref{proposition:blocking_set} yield perhaps the second simplest class of graphs without a permis. The \Define{wheel graph} is $W_{n+1} = K_2( C_n, K_1 )$.

\begin{corollary} \label{corollary:bad_wheels}
For all $2k+2 \ge 8$, the wheel graph $W_{2k+2}$ does not have a permis.
\end{corollary}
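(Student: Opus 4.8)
The plan is to apply Proposition \ref{proposition:blocking_set} directly, using the cycle rim of the wheel as the tethered set. First I would unpack the definition: $W_{2k+2} = K_2(C_{2k+1}, K_1)$ is the cycle $C_{2k+1}$ together with a single hub vertex $h$ joined to every vertex of the cycle. Since $2k+2 \ge 8$ forces $2k+1 \ge 7$, the rim is an odd hole of length at least seven, so by Proposition \ref{proposition:bad_cycles} it has no permis. The strategy is therefore to take $S$ to be the set of rim vertices, so that $G[S] = C_{2k+1}$ has no permis, and then verify that $S$ is tethered so that Proposition \ref{proposition:blocking_set} applies.

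The key computation is to identify $T = \Neighbourhood(S) \setminus S$. Every rim vertex is adjacent to its two rim neighbours (which lie in $S$) and to $h$, and $h$ is adjacent to every rim vertex, so $\Neighbourhood(S) = S \cup \{h\}$ and hence $T = \{h\}$. Checking the tethered condition then amounts to confirming that there is an edge between every $s \in S$ and every $t \in T$; since $T$ is the single vertex $h$ and $h$ is joined to all rim vertices in $K_2(C_{2k+1}, K_1)$, this holds trivially.

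Having established that $S$ is tethered and that $G[S] = C_{2k+1}$ has no permis, I would invoke Proposition \ref{proposition:blocking_set} to conclude that $G = W_{2k+2}$ has no permis, completing the proof. I do not expect any genuine obstacle here: the argument is a clean instantiation of the blocking-set criterion, and the only point requiring care is the correct determination of $T$, namely that removing the rim from $\Neighbourhood(S)$ leaves exactly the hub, so that the strong all-to-all edge requirement in the definition of \emph{tethered} reduces to the single (and manifestly satisfied) condition that $h$ dominates the rim.
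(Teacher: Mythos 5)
Your proposal is correct and is exactly the paper's intended argument: the corollary is derived by taking the rim $S = C_{2k+1}$ (which has no permis by Proposition \ref{proposition:bad_cycles} since $2k+2 \ge 8$ gives $2k+1 \ge 7$) as a tethered set with $T = \{h\}$ the hub, and applying Proposition \ref{proposition:blocking_set}. Your careful identification of $T = \Neighbourhood(S) \setminus S$ as the single hub vertex is precisely the detail that makes the instantiation go through, so nothing is missing.
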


An interesting consequence of Proposition \ref{proposition:blocking_set} is that having a permis is not a graph property that can be tested by focusing on an induced subgraph, even if the latter has all but seven vertices. Indeed, for any graph $H$, the graph $G = K_2( C_7, H )$ does not have permis, since the heptagon is tethered in $G$. Conversely, for any graph $H$ without a permis, adding a pending vertex $v'$ to each vertex $v$ of $H$ yields a graph $G$ where the set of simplicial vertices is a dominating set (all the vertices $v'$ form a maximal independent set of simplicial vertices). Therefore, some graphs with an induced heptagon do have a permis.

In general, the characterisation of graphs with a permis remains open.

\begin{question} \label{question:permis}
What is the complexity of the following decision problem: given $G$, does $G$ have a permis?
\end{question}





\begin{thebibliography}{10}

\bibitem{AGRS20}
Julio Aracena, Maximilien Gadouleau, Adrien Richard, and Lilian Salinas.
\newblock Fixing monotone boolean networks asynchronously.
\newblock {\em Information and Computation}, 274(104540), October 2020.

\bibitem{ARS17}
Julio Aracena, Adrien Richard, and Lilian Salinas.
\newblock Number of fixed points and disjoint cycles in monotone boolean
  networks.
\newblock {\em accepted in SIAM journal on Discrete mathematics}, 2017.

\bibitem{BM08}
J.A. Bondy and U.S.R. Murty.
\newblock {\em Graph Theory}, volume 244 of {\em Graduate Texts in
  Mathematics}.
\newblock Springer, 2008.

\bibitem{GR18}
Maximilien Gadouleau and Adrien Richard.
\newblock On fixable families of boolean networks.
\newblock In {\em Proc. Workshop on Asynchronous Cellular Automata}, pages
  396--405, September 2018.

\bibitem{GM12}
E.~Goles and M.~Noual.
\newblock Disjunctive networks and update schedules.
\newblock {\em Advances in Applied Mathematics}, 48(5):646--662, 2012.

\bibitem{Gol85}
Eric Goles.
\newblock Dynamics of positive automata networks.
\newblock {\em Theoretical Computer Science}, 41:19--32, 1985.

\bibitem{NS17}
Mathilde Noual and Sylvain Sen{\'e}.
\newblock Synchronism versus asynchronism in monotonic boolean automata
  networks.
\newblock {\em Natural Computing}, Jan 2017.

\bibitem{RRM13}
Landon Rabern, Brian Rabern, and Matthew Macauley.
\newblock Dangerous reference graphs and semantic paradoxes.
\newblock {\em Journal of Philosophical Logic}, 42(5):727--765, 2013.

\bibitem{Rob80}
F.~Robert.
\newblock Iterations sur des ensembles finis et automates cellulaires
  contractants.
\newblock {\em Linear Algebra and its Applications}, 29:393--412, 1980.

\bibitem{Yab93}
Steven Yablo.
\newblock Paradox without self-reference.
\newblock {\em Analysis}, 53(4):251--252, 1993.

\end{thebibliography}

\appendix

\section{Proof of Theorem \ref{theorem:complexity_suffix}} \label{appendix:proof_of_theorem:complexity_suffix}

\begin{proof}
We prove that the \textsc{Dominion} problem is NP-complete. It is in NP: the certificate is the pair $(v, I)$ where $W \cap \Neighbourhood(v) \subseteq \Neighbourhood(I)$.

We show NP-hardness by first reducing  \textsc{Set Cover} to \textsc{Colony} and then reducing \textsc{Colony} to \textsc{Dominion}.

\begin{theorem} \label{theorem:complexity_colony}
    \textsc{Colony} is NP-complete.
\end{theorem}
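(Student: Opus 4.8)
The plan is to show membership in NP and then prove NP-hardness by a reduction from \textsc{Set Cover}. Membership is immediate: a certificate for ``$T$ is a colony of $G$'' is an independent set $I$ with $T \subseteq \Neighbourhood(I)$, and checking that $I$ is independent and that every vertex of $T$ has a neighbour in $I$ takes polynomial time.

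For hardness, recall that an instance of \textsc{Set Cover} consists of a universe $U = \{u_1, \dots, u_m\}$, a family $\mathcal{F} = \{F_1, \dots, F_n\}$ of subsets of $U$, and a budget $k$, the question being whether some $k$ of the $F_j$ cover $U$. The delicate point is that \textsc{Colony} carries no explicit cardinality parameter, so the numeric budget $k$ must be encoded purely through the independence constraint; this is where I expect the main obstacle to lie. Indeed, with a naive bipartite construction (one vertex per set, one per element, membership edges, and $T$ the element vertices) the answer is trivially ``yes'', since taking $I$ to be all set vertices covers every element. The reduction must therefore supply a gadget whose sole effect is to cap $|I|$ at $k$ while leaving the covering behaviour intact.

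The gadget I would use is a \emph{selection structure} built from $k$ disjoint cliques (``slots''). Concretely, from a \textsc{Set Cover} instance build $G$ as follows: create one \emph{element vertex} $e_i$ for each $u_i$; for every slot $t \in \{1, \dots, k\}$ and every set index $j$ create a vertex $q_{t,j}$; make $\{q_{t,1}, \dots, q_{t,n}\}$ a clique for each fixed $t$ (so that any independent set meets each slot in at most one vertex); add the edge $q_{t,j} e_i$ whenever $u_i \in F_j$; and add no other edges. Finally set $T = \{e_1, \dots, e_m\}$. This has $O(m + kn)$ vertices, so the construction is polynomial.

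The verification then runs in both directions. If $F_{j_1}, \dots, F_{j_r}$ with $r \le k$ cover $U$, then $I = \{q_{1,j_1}, \dots, q_{r,j_r}\}$ meets each slot at most once and contains no element vertex, hence is independent, and it covers every $e_i$, so $T \subseteq \Neighbourhood(I)$ and $T$ is a colony. Conversely, given an independent $I$ with $T \subseteq \Neighbourhood(I)$, I would first discard any element vertices from $I$: they are adjacent only to $q$-vertices and so never help cover $T$, leaving $I \subseteq \{q_{t,j}\}$ with $T \subseteq \Neighbourhood(I)$ preserved. The slot-cliques force $|I| \le k$, and the family $\{F_j : q_{t,j} \in I \text{ for some } t\}$ then covers $U$ using at most $k$ sets. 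This establishes the equivalence and completes the reduction.
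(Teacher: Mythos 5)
Your proposal is correct and is essentially the paper's own reduction: the ``slot'' gadget with $k$ disjoint cliques $\{q_{t,1},\dots,q_{t,n}\}$ is exactly the paper's construction of vertices $q_j^1,\dots,q_j^k$ with the $l$-th copies of all sets joined in a clique, and both directions of the equivalence are argued the same way. If anything, your explicit step of discarding element vertices from $I$ before applying the pigeonhole bound $|I|\le k$ is slightly more careful than the paper's version, which applies the bound without noting that $I$ could a priori contain vertices of $T$.
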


\begin{proof}
    The proof is by reduction from \textsc{Set Cover}, which is NP-complete. 
    In \textsc{Set Cover}, the input is a finite set of elements $X=\{x_1,\ldots,x_n\}$, a collection  $C=\{C_1,C_2,\ldots,C_m\}$ of subsets of $X$, and an integer $k$. The question is whether there exists a subset $S\subseteq C$ of cardinality at most $k$ such that $\cup_{C_i\in S}C_i = X$. 
    
    We first construct the graph $G$ on $n+mk$ vertices. $G$ consists of: vertices $Q_j=\{q_j^1,\ldots, q_j^k\}$, for each $j\in [m]$; vertices $v_i$ for each $i\in [n]$; edges from each vertex in $Q_j$ to $v_i$, whenever $x_i\in C_j$; edges connecting $\{q_1^l,q_2^l,\ldots, q_m^l\}$ in a clique, for each $l\in [k]$. Let the target set $T=\{v_1,\ldots,v_n\}$. This concludes our construction; an illustrative example is shown in Fig. \ref{fig:setcover}.

    We now show that if $(X,C,k)$ is a yes-instance of \textsc{Set Cover}, then $(G,T)$ is a yes-instance of \textsc{Colony}.
    Let $S\subseteq C$  be a set cover of $X$ of cardinality at most $k$. 
    We obtain the set $I$ as follows: 
    $$I=\{q_j^a:C_j \text{ is the $a$th element of $S$}\}.$$
    Note that every node in $I$ exists in $G$ since $S$ has cardinality at most $k$ (the last subset to appear in $S$ is its $k$th element exactly). Further, $I$ is an independent set, since by construction every node $q_j^a$ is adjacent to some other node $q_l^b$ if and only if $a=b$. 
    Lastly, every node $v_i\in S$ is incident to some node in $I$; for any $i$, $\exists j: v_i\in C_j$. Then necessarily $\exists a: q_j^a\in I$, and by construction $(v_i, q_j^a)$ is an edge in $G$. 

    Conversely, if $(G,T)$ is a yes-instance of \textsc{Colony} then $(X,C,k)$ is a yes-instance of \textsc{Set Cover}.
    Let $I$ be an independent set in $G$ which colonizes $T$. By construction of $G$, $I$ has cardinality at most $k$. Suppose otherwise, for contradiction - then by the pigeon-hole principle there is some clique $C_j$ such that $|C_j\cap I|\geq 2$, contradicting that $I$ is an independent set.
    We obtain the set $S$ of cardinality $|I|$ as follows:
    $$S=\{C_j:\exists a \text{ such that }q_j^a\in I\}.$$
    We now show $S$ is a set cover of $X$. For each $i\in[n]$, $v_i$ must be adjacent to some node in $I$; denote this node $q_j^a$ - now by construction $x_i$ is in the set $C_j$, and $C_j \in S$. 

\qed \end{proof}

\begin{figure}
    \centering
    \includegraphics[page=1,width=\textwidth]{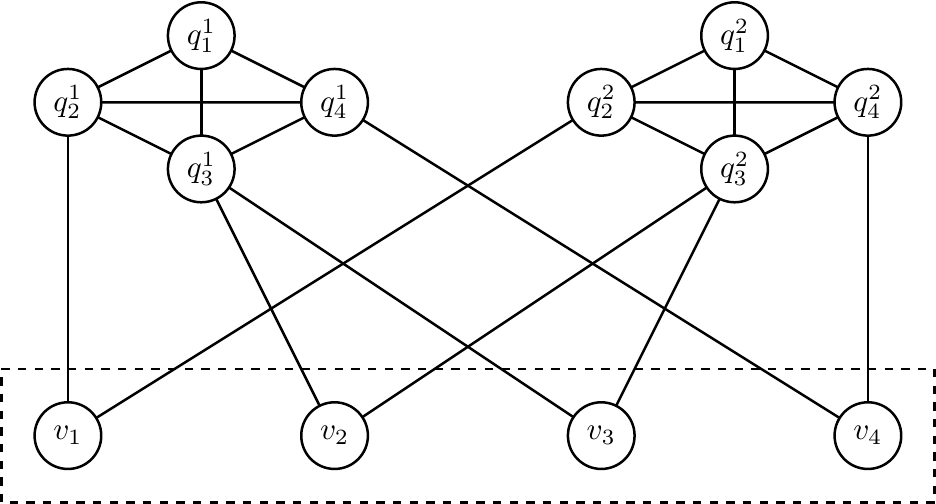}
    \caption{Illustration of the reduction from \textsc{Set Cover} to \textsc{Colony} (the set $T$ is the nodes in the dashed box). Here the \textsc{Set Cover} instance has $C_1=\emptyset,C_2=\{x_1\},C_3=\{x_2,x_3\},C_4=\{x_4\}$, with $k=2$. Observe that both the \textsc{Set Cover} instance and the \textsc{Colony} instance are no-instances.}
    \label{fig:setcover}
\end{figure}

\begin{theorem} \label{theorem:complexity_dominion}
\textsc{Dominion} is NP-complete.
\end{theorem}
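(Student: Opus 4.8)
The plan is to prove NP-hardness by reducing from \textsc{Colony}, which is NP-complete by Theorem \ref{theorem:complexity_colony}; since membership in NP has already been recorded above (the certificate being a pair $(v,I)$ with $W \cap \Neighbourhood(v) \subseteq \Neighbourhood(I)$ in $G-v$), this suffices to conclude completeness.

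Given an instance $(G,T)$ of \textsc{Colony}, I would build the instance $(G',W)$ of \textsc{Dominion} as follows: let $G'$ be obtained from $G$ by adding one new vertex $v$ joined to every vertex of $T$ and to nothing else, and set $W = V(G)$. This construction is plainly computable in polynomial time.

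For correctness, the key observation is that taking $W$ to be all of $V(G)$ makes $v$ the \emph{unique} vertex outside $W$, so $v$ is forced to be the witness in the definition of a dominion. Since $\Neighbourhood_{G'}(v) = T \subseteq V(G) = W$, we have $W \cap \Neighbourhood_{G'}(v) = T$, and $G' - v = G$. Consequently $W$ is a dominion of $G'$ if and only if $T$ is a colony of $G' - v = G$, that is, if and only if $(G,T)$ is a yes-instance of \textsc{Colony}. This establishes the reduction, hence NP-hardness, and together with NP membership it gives NP-completeness.

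The main obstacle in a reduction of this type is eliminating spurious witnesses: a priori $W$ might qualify as a dominion through some vertex other than the planted $v$, and because colonies are downward closed (in particular the empty set is always a colony), one must guard against an accidental witness $u$ for which $W \cap \Neighbourhood(u)$ is small or empty. Setting $W = V(G)$ resolves this at a stroke, since it leaves $v$ as the only admissible witness and thereby collapses the dominion test on $(G',W)$ to precisely the colony test on $(G,T)$.
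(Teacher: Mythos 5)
Your reduction is correct, and it is genuinely simpler than the paper's. Both proofs reduce from \textsc{Colony} (after recording NP membership via the certificate $(v,I)$), but the constructions differ. The paper, given a \textsc{Colony} instance $(G,S)$ with $T = V \setminus S$, builds $\hat{G}$ by adding a pendant copy $t'$ for every $t \in T$ plus an apex vertex $\hat{v}$ adjacent to $S$, and sets $\hat{S} = S \cup T'$; it must then do extra work to rule out spurious witnesses, arguing that no $u = t \in T$ can serve as the vertex in the dominion definition because the pendant $t' \in \hat{S} \cap \Neighbourhood(t)$ becomes isolated in $\hat{G} - t$ and hence cannot be colonised. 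You sidestep this entirely: by adding a single vertex $v$ joined exactly to the target set $T$ and taking $W = V(G)$, you make $v$ the \emph{unique} vertex outside $W$, so the existential quantifier in the definition of a dominion collapses to the single candidate $v$, and since $W \cap \Neighbourhood_{G'}(v) = T$ and $G' - v = G$, the dominion test on $(G',W)$ is literally the colony test on $(G,T)$. Your construction adds one vertex instead of $|T|+1$, and the correctness argument becomes a two-line equivalence rather than a case analysis over possible witnesses; the paper's more elaborate gadget buys nothing for this particular theorem that your version lacks (both are polynomial-time and answer-preserving), though its style of copying vertices foreshadows the machinery used later in the \textsc{Fixing Set} reduction. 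Your closing remark correctly identifies the one trap in this kind of reduction --- accidental witnesses $u$ with $W \cap \Neighbourhood(u)$ empty or easily colonised, the empty set being a colony --- and your choice of $W$ does eliminate it.
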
 

\begin{proof}
The proof is by reduction from \textsc{Colony}, which is NP-complete, as proved in Theorem \ref{theorem:complexity_colony}. Let $(G,S)$ be an instance of \textsc{Colony}, and construct the instance $(\hat{G}, \hat{S})$ as follows.

Let $G = (V,E)$ and denote $T = V \setminus S$. Then consider a copy $T' = \{t' : t \in T\}$ of $T$ and an additional vertex $\hat{v} \notin V \cup T'$. Let $\hat{G} = (\hat{V}, \hat{E})$ with $\hat{V} = V \cup T' \cup \{ \hat{v} \}$ and $\hat{E} = E \cup \{ tt' : t \in T \} \cup \{ s\hat{v} : s \in S \}$, and $\hat{S} = S \cup T'$. This construction is illustrated in Fig. \ref{fig:dominion}.

We only need to prove that $S$ is a colony of $G$ if and only if $\hat{S}$ is a colony of $\hat{G}$. Firstly, if $S$ is a colony of $G$, then there exists an independent set $I$ of $G$ such that $S \subseteq \Neighbourhood( I; G )$. Then $\hat{S} \cap \Neighbourhood( \hat{v}; \hat{G} ) = S$ is contained in $\Neighbourhood( I; \hat{G} - \hat{v} )$, thus $\hat{S}$ is a dominion of $\hat{G}$.

Conversely, if $\hat{S}$ is a dominion of $\hat{G}$, then there exists $u \in \hat{V} \setminus \hat{S}$ such that $\hat{S} \cap \Neighbourhood( u; \hat{G} )$ is a colony of $\hat{G} - u$. Then either $u = \hat{v}$ or $u \in T$. Suppose $u = t \in T$, then $t' \in \hat{S}$ is an isolated vertex of $G - t$, hence $\hat{S} \cap \Neighbourhood( t; \hat{G} )$ is not a colony of $\hat{G} - t$. Therefore, $u = \hat{v}$ and there exists an independent set $\hat{I}$ of $\hat{G} - \hat{v}$ such that $\hat{S} \cap \Neighbourhood( \hat{v} ; \hat{G} ) = S$ is contained in $\Neighbourhood( \hat{I}; \hat{G} )$. Since $S \subseteq V$ and $\Neighbourhood(S; \hat{G} - \hat{v}) \subseteq V$, we obtain $S \subseteq \Neighbourhood( \hat{I} \cap V; \hat{G} - \hat{v} ) \cap V = \Neighbourhood( \hat{I} \cap V; G )$, where $I = \hat{I} \cap V$ is an independent set of $G$. Thus, $S$ is a colony of $G$.
\qed \end{proof}

\begin{figure}
    \centering
    \includegraphics[page=2,width=.5\textwidth]{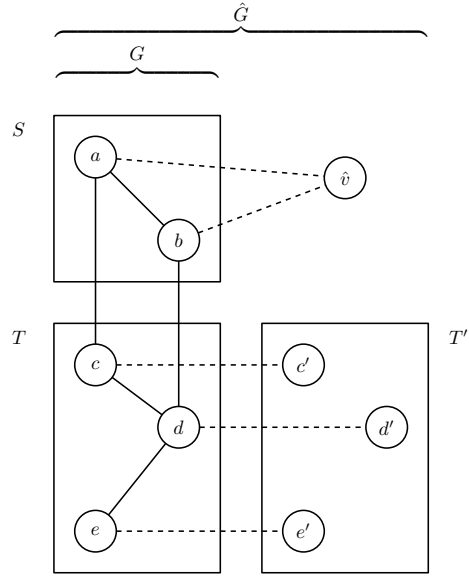}
    \caption{Example reduction from a no-instance of \textsc{Colony} $(G,S)$ to the corresponding no-instance of \textsc{Dominion} $(\hat{G}, \hat{S})$, with $\hat{S}:=S\cup T'$. 
}
    \label{fig:dominion}
\end{figure}

\qed \end{proof}

\section{Proof of Theorem \ref{theorem:complexity_fixing_set}} \label{appendix:proof_of_theorem:complexity_fixing_set}

\begin{proof}
The proof is by reduction from \textsc{Non-Dominion}, which is NP-hard, as proved in Theorem \ref{theorem:complexity_dominion}. Let $(G, S)$ be an instance of \textsc{Non-Dominion}, and construct the instance $( \hat{G}, \hat{S} )$ as follows.

Let $G = (V,E)$ and $T = V \setminus S$. For any $t \in T$, let $G_t = (V_t \cup \{ \hat{t} \}, E_t)$ be the graph defined as follows: $V_t = \{ u_t : u \in V \setminus t \}$  is a copy of all the vertices apart from $t$, which is replaced by a new vertex $\hat{t} \notin V_t$, and $E_t = \{ a_tb_t : ab \in E, a,b \ne t \} \cup \{ s_t \hat{t} : st \in E, s \in S \}$ is obtained by removing the edges between $t$ and the rest of $T$. Then $G$ is the disjoint union of all those graphs, i.e. $G = \bigcup_{t \in T} G_t$, while $\hat{S} = \bigcup_{t \in T} V_t$. For the sake of simplicity, we shall use the notation $A_t = \{ u_t : u \in A \}$ for all $A \subseteq V \setminus \{ t \}$.

By construction, $\hat{G} - \hat{S}$ is the empty graph on $\{ \hat{t} : t \in T \}$, hence $\hat{S}$ is a vertex cover of $\hat{G}$. All we need to show is that $\hat{S}$ is a non-dominion of $\hat{G}$ if and only if $S$ is a non-dominion of $G$. We have that $\hat{S}$ is a dominion of $\hat{G}$ if and only if there exists $\hat{t}$ and an independent set $\hat{I}$ of $\hat{G} - \hat{t}$ such that $W = \hat{S} \cap \Neighbourhood( \hat{t}; \hat{G} ) = ( S \cap \Neighbourhood( t; G ) )_t$ is contained in $\Neighbourhood( \hat{I} ; \hat{G} )$. We have $\hat{I} \cap V_t = I_t$ for some independent set $I$ of $G$. Since $W \subseteq V_t$ and $\Neighbourhood( W; \hat{G} - \hat{t} ) \subseteq V_t$, we have $W \subseteq \Neighbourhood( \hat{I} \cap V_t; \hat{G} ) \cap V_t = \Neighbourhood( I; G )_t$, which is equivalent to $S$ being a dominion of $G$.
\end{proof}

\begin{figure}
    \centering
    \includegraphics[page=3, width=\textwidth]{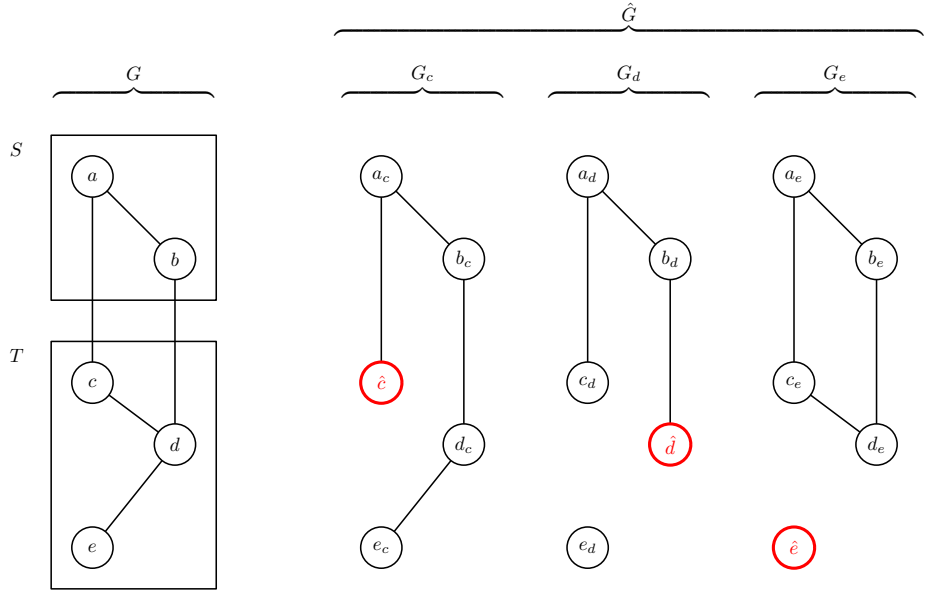}
    \caption{Example reduction from a no-instance of \textsc{Non-Dominion} $(G,S)$ to the corresponding no-instance of \textsc{Fixing Set} $(\hat{G}, \hat{S})$, with $\hat{S} = V_c \cup V_d \cup V_e$.}
    \label{fig:enter-label}
\end{figure}


\section{Proof of Theorem \ref{theorem:good_graphs}} \label{appendix:proof_of_theorem:good_graphs}

\begin{lemma} \label{lemma:simplicial_MIS}
Let $G$ be a graph such that its set $S$ of simplicial vertices is a dominating set. Then $S$ contains a maximal independent set.
\end{lemma}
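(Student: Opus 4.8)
The plan is to produce the desired maximal independent set inside $S$ by choosing a set $M$ that is a maximal independent set of the induced subgraph $G[S]$ and then arguing that it is in fact maximal in the whole of $G$. Such an $M$ exists (greedily, or by Zorn's lemma), it is a subset of $S$, and since independence is determined by the edges of $G$ alone, $M$ is also an independent set of $G$. Recall that an independent set is maximal precisely when it is dominating, so the entire task reduces to showing that $M$ dominates every vertex of $G$.

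Fix an arbitrary vertex $w \in V$; I must show $w \in \Neighbourhood[m]$ for some $m \in M$. Because $S$ is a dominating set, there is a simplicial vertex $s \in S$ with $w \in \Neighbourhood[s]$. Because $M$ is a \emph{maximal} independent set of $G[S]$, it is dominating within $G[S]$, so either $s \in M$ or $s$ has a neighbour in $M$ (and neighbours in $G[S]$ are neighbours in $G$, as $G[S]$ is induced). The first case is immediate: if $s \in M$, then $w \in \Neighbourhood[s]$ already witnesses that $w$ is dominated by $M$.

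The crux is the second case, where $s \notin M$ but $s$ has a neighbour $m \in M \subseteq S$. Here I would invoke the simplicial hypothesis on $s$ in the form recorded in the preliminaries, namely that $\Neighbourhood[s] \subseteq \Neighbourhood[v]$ for every $v \in \Neighbourhood[s]$. Since $m \in \Neighbourhood(s) \subseteq \Neighbourhood[s]$, this yields $\Neighbourhood[s] \subseteq \Neighbourhood[m]$, and hence $w \in \Neighbourhood[s] \subseteq \Neighbourhood[m]$, so $w$ is dominated by $m \in M$. As $w$ was arbitrary, $M$ is a dominating independent set, that is, a maximal independent set of $G$ contained in $S$.

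The one step deserving care is exactly this second case: without the simplicial property there is no reason that a vertex $w$ dominated by $s$ should also be dominated by a neighbour $m$ of $s$ lying in $M$. The inclusion $\Neighbourhood[s] \subseteq \Neighbourhood[m]$ is precisely what lets domination "propagate" from $s$ to its chosen neighbour in $M$, and it is where the hypothesis that the dominating set consists of simplicial vertices is essential. The degenerate subcases ($w = s$, or $s$ isolated in $G[S]$) are absorbed into the same argument, since then necessarily $s \in M$ and the first case applies.
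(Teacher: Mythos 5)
Your proof is correct, but it takes a genuinely different route from the paper's. The paper argues dually: it starts from the \emph{dominating} side, taking a minimal dominating set $T \subseteq S$ and showing it must be independent --- if $t, t' \in T$ were adjacent, then since both are simplicial, $\Neighbourhood[t] = \Neighbourhood[t']$, so $T \setminus \{t'\}$ would still dominate, contradicting minimality; an independent dominating set is then the desired maximal independent set. You instead start from the \emph{independent} side, taking any maximal independent set $M$ of $G[S]$ and verifying that it dominates all of $G$, using the simplicial inclusion $\Neighbourhood[s] \subseteq \Neighbourhood[m]$ for $m \in \Neighbourhood(s)$ to propagate domination from $s$ to its neighbour in $M$. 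The paper's pruning argument is slightly shorter; your argument buys a cleaner and strictly stronger structural fact --- \emph{every} maximal independent set of $G[S]$ is already a maximal independent set of $G$ --- which in fact matches how the lemma is deployed in the proof of Theorem \ref{theorem:good_graphs}\ref{item:simplicial}, where an explicit maximal independent set $M$ of simplicial vertices is scheduled last in the permis. One small imprecision: in your closing remark, the subcase $w = s$ does not by itself force $s \in M$ (it is handled perfectly well by your second case, since $w = s \in \Neighbourhood[s] \subseteq \Neighbourhood[m]$); only the subcase where $s$ is isolated in $G[S]$ forces $s \in M$ by maximality. This does not affect the validity of the argument.
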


\begin{proof}
Suppose for the sake of contradiction that $S$ does not contain a maximal independent set, i.e. that no dominating set contained in $S$ is independent. Let $T \subseteq S$ be a minimal dominating set, and let $t, t'$ be adjacent vertices of $T$ so that $\Neighbourhood[ t ] = \Neighbourhood[ t' ]$. Thus $T \setminus \{ t' \}$ is still a dominating set, which is the desired contradiction.
\qed \end{proof}

\begin{proof}[of Theorem \ref{theorem:good_graphs}]
\begin{enumerate}

    \item 
    Proof by computer search. Code available at \url{https://github.com/dave-ck/MISMax}

    \item 
    Let $G$ be a comparability graph, and order its vertices so that $v_i \sqsubseteq v_j \implies i \le j$. Then let $w = w_1 \dots w_n$ with $w_i = v_i$ for all $i \in [n]$. For the sake of contradiction, suppose that $y_{ \Neighbourhood[w_i] } = 0$ for some $i \in [n]$. Since $\Kernel( y^{i-1} )_{w_i} = y_{w_i} = 0$, we have $y^{i-1}_{ \Neighbourhood[ w_i ] } \ne 0$. Therefore, let $j = \max\{ k \in [n] : w_k  \sim w_i, y^{i-1}_{w_j} = 1 \}$; since $y^{i-1}_{w_k} = y_{w_k} = 0$ for all $k \le i-1$, we have $j \ge i + 1$. But $\Kernel( y^{j-1} )_{ w_j } = y_{w_j} = 0$, thus there exists $l = \max\{ k \in [n] : w_k \sim w_j, y^{j-1}_{w_j} = 1 \}$. Again, $l \ge j + 1$, and hence $y^{i-1}_{w_l} = 1$. However, $w_l \sim w_j$ and $w_j \sim w_i$ imply that $w_l \sim w_i$, thus $l \in \{ k \in [n] : w_k \sim w_i, y^{i-1}_{w_j} = 1 \}$ and $l \le j$, which is the desired contradiction.

    \item
    By Lemma \ref{lemma:simplicial_MIS}, let $M$ be a maximal independent set of simplicial vertices, and let $w$ be a permutation of $G$ such that the vertices of $M$ appear last: $w_1, \dots, w_{n - |M|} \notin M$ and $w_{n- |M| + 1}, \dots, w_n \in M$. Suppose for the sake of contradiction that $y_{\Neighbourhood[ v ]} = 0$ for some vertex $v$. Then there exists $m \in M$ such that $\Neighbourhood[ m ] \subseteq \Neighbourhood[ v ]$, thus $y_{\Neighbourhood[ m ]} = 0$. Suppose $m = w_{a+1}$, then $y^a_{ \Neighbourhood( w_{a+1} ) } = y_{ \Neighbourhood( m ) } = 0$, hence $y_m = \Kernel( y^a )_{ w_{a+1} } = 1$, which is the desired contradiction.
    
    \item
    It is easily shown that a graph composition can be obtained by repeatedly blowing up one vertex $b$ into the graph $G_b$ at a time. As such, we only need to prove the case where $G = H( K_1, \dots, K_1, G_b, K_1, \dots, K_1 )$, where the vertices are sorted according to a permis $\hat{w} = \hat{w}_1 \dots \hat{w}_b \dots \hat{w}_n$ of $H$. For any configuration $x$ of $G$, let $\hat{x}$ be the configuration of $H$ such that $\hat{x}_u = x_u$ for all $u \ne \hat{w}_b$ and $\hat{x}_{ \hat{w}_b} = \bigjoin_{v \in G_b} x_v$. Let $w^b$ be a permis of $G_b$ and consider the permutation $w$ of $G$ given by $w = \hat{w}_1 \dots \hat{w}_{b-1} w^b \hat{w}_{b+1} \dots \hat{w}_n$. We then prove that $y \in \Kernels( G )$ by considering the three main steps of $w$. We denote the vertex set of $G_b$ as $V_b$.
    \begin{itemize}
        \item Step 1: before the update of $G_b$. It is easy to show that for any $1 \le a < b$, we have $\Kernel^{w_1 \dots w_a}( x; G )_{ G - V_b } = \Kernel^{ \hat{w}_1 \dots \hat{w}_a }( \hat{x}; H )_{H - \hat{w}_b}$.

        \item Step 2: update of $G_b$. Note that $V_b$ is a tethered set of $G$, so let $T = \Neighbourhood( V_b; G ) \setminus V_b$. Let $\alpha = y^{b-1}$ be the initial configuration and $\beta = y^{b-1 + |V_b|}$ be the final configuration of the update of $G_b$. If $\alpha_{ T } \ne 0$, then the whole of $G_b$ will be updated to $0$: $\beta_{V_b} = 0$. Otherwise, it is as if $G_b$ is isolated from the rest of the graph and $\beta_{V_b} = \Kernel^{w^b}( \alpha_{V_b} ; G_b )$. In either case, we have $\hat{ \beta } = \Kernel^{ ( \hat{w}_b ) }( \hat{ \alpha }; H )$.

        \item Step 3: after the update of $G_b$. Again, we have for all $b < a \le n$, $\Kernel^{w_{b+1} \dots w_a}( \beta; G )_{ G - V_b } = \Kernel^{ \hat{w}_{b+1} \dots \hat{w}_a }( \hat{ \beta }; H )_{H - \hat{w}_b}$.
    \end{itemize}
    In conclusion, we have $y_{G - V_b} = \Kernel^{ \hat{w} }( \hat{x}; H )_{H - \hat{w}_b}$, and if $\Kernel^{ \hat{w} }( \hat{x} )_{ \hat{w}_b } = 0$ then $y_{ V_b } = 0$ else $y_{ V_b } = \Kernel^{w^b}( x_S; G_b )$. In either case, we obtain that $y \in \Kernels( G )$.
    
\end{enumerate}
\qed \end{proof}

\section{Proof of Proposition \ref{proposition:blocking_set}} \label{appendix:proof_of_proposition:blocking_set}

\begin{proof}
Let $w$ be a permutation of $G$ and $\hat{w}$ be the subsequence of $w$ satisfying $[\hat{w}]=S$. Let $\hat{x}$ be a configuration of $G[S]$ which is not fixed by $\hat{w}$: $\Kernel^{\hat{w}} ( \hat{x}; G[S] ) \notin \Kernels( G[S] )$. We first note that $\hat{x} \ne 0$ and that for all $0 \le a \le |\hat{w}|$, $\Kernel^{ \hat{w}_1 \dots \hat{w}_a }( \hat{x}; G[S] ) \ne 0$.

Let $T = \Neighbourhood( S ) \setminus S$ and $U = V \setminus ( S \cup T )$ and $x = ( x_S = \hat{x}, x_T = 0, x_U )$. We prove by induction on $0 \le b \le |w|$ that 
\[
    y^b := \Kernel^{ w_1 \dots w_b }( x; G ) = \left( y^b_S = \Kernel^{ \hat{w}_1 \dots \hat{w}_{ b' } }( \hat{x}; G[S] ) , y^b_T = 0, y^b_U \right),
\]
where $b'$ is defined by $[ \hat{w}_1 \dots \hat{w}_{ b' } ] = S \cap [ w_1 \dots w_b ]$. The base case $b = 0$ is clear. Suppose it holds for $b - 1$.
\begin{itemize}
    \item Case 1: $w_b \in S$. Then $b' = (b-1)' + 1$ and $w_b = \hat{w}_{ b' }$. Since $y^{b-1}_T = 0$, we have 
    \[
        y^b_{w_b} = \Kernel( y^{b-1} ; G )_{w_b} = \Kernel( y^{b-1}_S ; G[S] )_{w_b} = \Kernel(  \Kernel^{ \hat{w}_1 \dots \hat{w}_{ b' - 1 } }( \hat{x} ; G[S] ) ; G[S]  )_{ \hat{w}_{ b' } } = \Kernel^{ \hat{w}_1 \dots \hat{w}_{ b' } }( \hat{x}; G[S] )_{w_b},
    \]
    and hence $y^b_S = \Kernel^{ \hat{w}_1 \dots \hat{w}_{ b' } }( \hat{x}; G[S] )$.

    \item Case 2: $w_b \in T$. Then $b' = (b-1)'$. Since $y^{b-1}_S \ne 0$, we have $\Kernel( y^{b-1} ; G )_{w_b} = 0$ and hence $y^b_T = 0$.

    \item Case 3: $w_b \in U$. This case is trivial.
\end{itemize}

For $b = |w|$ we obtain $y = \Kernel^w(x ; G) = ( \Kernel^{ \hat{w} } ( \hat{x}; G[S] ), 0, y_U )$, for which $y_S \notin \Kernels( G[ S ] )$, and hence $y \notin \Kernels( G )$.
\qed \end{proof}

\end{document}